\documentclass[aps,pra,floatfix,singlecolumn,nofootinbib]{revtex4-1}

\usepackage{graphicx}
\usepackage{subcaption}
\usepackage{amssymb,amsmath,amsfonts,amsthm,mathtools,mathrsfs,eufrak,mathdots}
\usepackage{braket,bm}
\usepackage{epsf,nicefrac}
\usepackage{color}
\usepackage[colorlinks = true]{hyperref}
\usepackage{tikz}
\usetikzlibrary{positioning,shapes.geometric}
\usepackage{pgfplots}
\pgfplotsset{compat=newest}
\pgfplotsset{width=\textwidth}

\newcommand{\abs}[1]{\left| #1 \right|}
\DeclareMathOperator{\arccot}{arccot}

\DeclareMathOperator{\ident}{id}

\def\dim{\mathsf{dim}}

\newcommand{\coll}[1]{\left\{ #1 \right\}}
\def\d{\textrm{d}}

\def\<{\langle}
\def\>{\rangle}
\newcommand{\iu}{\mathrm{i}}

\def\N#1{\left|\!\left|#1\right|\!\right|}

\newcommand{\uv}{\mathbf{1}}

\DeclareMathOperator{\Span}{span}

\def\Rationals{\mathbb Q}
\def\Reals{\mathbb R}

\def\Intgrs{\mathbb Z}

\def\Tr{\operatorname{Tr}}

\def\Tr{\operatorname{Tr}}

\definecolor{darkgreen}{RGB}{0, 150, 0}

\newtheorem{lemma}{Lemma}[section]
\newtheorem{remark}{Remark}[section]

\begin{document}
\title{Opinion bubbles in the quantum Buridan's ass problem}
\author{L. Chaichenets$^1$, J. Novotn\'y$^1$, I. Jex$^1$, S. M. Barnett$^2$, T. Kiss$^3$}
\affiliation{$^1$Department of Physics, Faculty of Nuclear Sciences and Physical Engineering, Czech Technical
University in Prague, B\v rehov\'a 7, 115 19 Praha 1 - Star\'e M\v esto, Czech Republic}
\affiliation{$^2$School of Physics and Astronomy, University of Glasgow, Glasgow G12 8QQ, UK}
\affiliation{$^3$HUN-REN Wigner Research Centre for Physics, 1525 P.O. Box 49, Hungary}
\date{\today}

\begin{abstract}
The quantum Buridan's ass problem concerns a network of interconnected nodes in which transitions between the nodes are governed by the laws of quantum mechanics.  In such situations interference between probability amplitudes plays a key role in determining whether the probability of leaving the initial node is large or small and also the speed at which the associated transitions occur.  In this paper we extend this idea to consider 
transitions between one given set of coupled nodes and the remainder of the network. We have in mind a quantum analogy to changing an opinion, in which the initially occupied set of nodes is one opinion bubble, with the remaining parts of the network corresponding to a different opinion.  We find that quantum interference again plays a crucial role in the dynamics.  A further key component is the role of symmetry between the nodes and their interconnections, as it is symmetry that lies at the heart of the Buridan's ass puzzle. In the classical Buridan's ass paradox it is symmetry between competing alternatives that freezes the ass into indecision.  In the quantum case, however, interference results in a
far richer dynamics.
\end{abstract}

\maketitle

\section{Introduction}
\label{sec:introduction}
Let us begin by outlining a classical (and familiar) analogue of the problem we seek to address.  Sitting in a nice restaurant we find that among the tempting dishes on offer there are two, dish $A$ and dish $B$, that we find equally appealing.  Who has not been in this situation, mentally opting for one dish then the other until an external perturbation, perhaps the arrival of a waiter, forces a choice \cite{Schwartz, Weintraub, Rescher, Lamport, Malhotra}?  This is an every-day example of the Buridan's ass problem in which the unlucky ass starves for want of a reason to choose between two identical piles of carrots \cite{Weintraub,vanFraassen, Curie, Chalmers}.  Yet choosing dish $A$ or dish $B$ may be only the beginning of the diner's problems, for having opted, say, for dish $A$ there are further options that go with it: potatoes or rice, salad or vegetables, water, wine or beer, ... .  Thus option $A$ may, and usually will, have further choices.  The complete set of sub-options corresponding to choosing dish $A$ may be thought of as a composite choice, a decision or opinion bubble bearing in it a clear sign of symmetry breaking .

We can represent all of the choices as a network of interconnected nodes with the connections between them associated with the thought processes, pros and cons, associated with a change of mind \cite{Estrada}.  Our task in this paper is to analyze the dynamics of such
a network within the context of quantum theory.  In a recent publication we addressed the simpler problem of the evolution from a single node to any of a selection of other nodes \cite{Buridan_paper}.  This would correspond, for our diner, to making a decision, or more precisely, a change of mind from dish $A$ to any of the others available without the subsequent additional options.  We found that, in stark contrast to the classical situation,
quantum interferences could greatly speed up this process or, effectively, freeze it depending on the details of the network.

Our aim in this paper is to extend our earlier work \cite{Buridan_paper} to examine the behaviours of interconnected decision bubbles in quantum theory.  To be specific, we 
consider a network or graph of interconnected nodes which are separated into two subgraphs, or
bubbles, $A$ and $B$, with both internal connections within each bubble and also between 
the nodes of bubbles $A$ and $B$.  Our aim is to explore the roles of these connections 
in either enabling or inhibiting transitions between the bubbles.

The quantum network \cite{Weihs, TSJ, Torma} analyzed here has broad potential implications for a diverse range of quantum phenomena. In quantum technologies, the sharing of quantum information between multiple parties is a natural example \cite{Townsend,Phoenix}. Related, also, is the idea of distributed quantum computation \cite{Caleffi} in which strings of qubits must be exchanged, coherently, between multiple processors. Elsewhere in physics, there exists a multitude of important examples involving multiple interfering paths. A few examples serve to illustrate the point: In superconductor quantum information processing, networks of Josephson junctions have been proposed for encoding and protecting individual qubits 
\cite{Ginossar}.  In image processing, the multimode light comprising an image may be filtered, modified, or enhanced by actions in the Fourier plane \cite{Hecht}. In photonics, optical signals may be multiplexed and processed using, for example, waveguide arrays \cite{Saleh}.  Finally, in the multiphoton excitation of atoms, interferences are to be found in the interferences between excitation points that proceed via different electronic energy levels \cite{Krinitzky, Buckle, Shore,Sonja}.  It has also been pointed out that the 
superposition principle may play a role in the all-important process of light-harvesting in photosynthesis \cite{Burgess}.  More exotically, an 
elementary mechanistic model for a quantum brain has been proposed by Fisher \cite{Fisher}.
In particular, the dynamics of the Q-connectone, a three-layer quantum brain model, was
studied with the aim of determining whether quantum coherences might resolve a Buridan 
puzzle in the presence of environmental noise \cite{Wakaura}.

Our paper is organized as follows: our network model is introduced in Section \ref{sec:model}. In Section \ref{sec:spectral_analysis} we solve the associated spectral problem. These results are used in Section \ref{sec:subgraph_effects} devoted to the analysis of effects arising from the network dynamics on the level of subgraphs. Similarly, effects arising on the level of nodes are commented on in detail in Section \ref{sec:node_effects}. Some generalizations are discussed in Section \ref{sec:generalization}. Summary and conclusions are left for Section \ref{sec:conclusions}. 

\section{Network model}
\label{sec:model}
We consider a graph or network comprised of a selection of nodes coupled by lines or edges.
In Fig. \ref{fig:buridan_examples} we depict three such arrangements studied in our earlier 
work \cite{Buridan_paper}.  In each case the initial state of the network is that a single 
quantum excitation is isolated at (one of) the red node(s).  The couplings between the nodes
correspond to transitions mediated by matrix elements of a model Hamiltonian.  For simplicity,
and also to explore the role of symmetry, essential in the Buridan's ass problem, we 
treated each of the links between the nodes as equal, so that the Hamiltonian matrix elements 
between any two linked nodes were the same.  To summarize our earlier results, in the first
network the transition from the initial red node happens ever more quickly as additional nodes 
are added, while in the second, the state is, to a large extent, frozen at the initial node.
In the third there is rapid a transition from one of the red nodes to the other even though 
there is no direct link between them.  In each case, these behaviours are a consequence
of quantum coherence or interference.  Such behaviour is ubiquitous in both optical and
quantum systems and, indeed, may be thought of as a characteristic of quantum dynamics
\cite{Feynman,FeynmanThesis,Hibbs}.

Although our study can apply to a very wide variety of physical implementations, we find it helpful to have in mind just one. This is the arrangement of optical multiports in which light is distributed by means of a collection of beam splitters that play the role of our nodes. Such passive optical networks are quite popular among physicists, both as theoretical tools as well as experimental devices \cite{Reck, interferometers, beamsplitters}. While the topology \cite{Estrada} of the arrangement plays a central role, it is not obvious or straightforward to deduce the form of the associated dynamics, and hence there is quite a space for surprises and counterintuitive results. The analysis of multiports can be done in different ways and a particularly efficient one is to assign a Hamiltonian to the dynamics which such a multiport implements. It is quite natural to expect that the Hamiltonian will exhibit the same symmetry as the multiport itself. In fact, such an approach was followed and the Hamiltonian model of a fully symmetric multiport was defined in \cite{JSZ}. For passive networks the Hamiltonian is of the form
\begin{align}
\label{eqn:hamiltonian}
H & = \sum_{i, j} c_{ij} a_i^\dagger a_j,
\end{align}
where $a_j$ ($a_i^\dagger$) is the annihilation (creation) operator for the mode $i$ (respectively $j$), and $c_{ij}$ is the weight of the line connecting node $j$ to node $i$. 
To ensure Hermiticity, we require the coupling coefficients to satisfy the condition $c_{ji} = \overline{c}_{ij}$.

In this paper we extend these ideas to consider the effects of quantum interference in the transitions between bubbles, a collection of coupled nodes also coupled to the rest of the network. Motivated by the Buridan paradox, we retain a large element of symmetry.  To this 
end, while maintaining an adequate level of generality, we consider the case of our bubbles being two regular subgraphs. Regularity means that each vertex in a graph has the same degree, so that each node has the same number of neighbours. In a regular subgraph, $A$, we shall denote this number of neighbours (not counting the node as its own neighbour) by $d_A$. Let us illustrate this point with a few examples of both regular and irregular subgraphs.  Cliques are fully connected subgraphs, so $d_A = m - 1$, where $m$ is the number of vertices. Cycles have their vertices arranged in a closed loop and hence $d_A = 2$. Star graphs, however, are not regular for $m > 2$, because the hub node has $m - 1$ neighbours but each leaf is connected only to the hub node. A simple example of a graph we were not able to consider previously, but which lends itself to analysis in our new framework is depicted in Fig. \ref{fig:clique_to_cycle}. The two red nodes form one subgraph while the remaining four nodes comprise the second. The intra-subgraph connections within these bubbles are represented, respectively, by the red and blue lines, while the inter-subgraph links are represented by black lines.

We consider the evolution only of a single photon prepared initially in one of the nodes. This corresponds to an initial condition in which a single quantum is transferred through the network.  We remark that our simple model Hamiltonian has two additional advantages that may simplify both experimental realization and also the exploration of multi-excitation 
dynamics. For the first of these we note that, in full generality, the intensities registered in a laser implementation are in proportion to the single-photon probabilities, so it may not be necessary to realize a single-photon experiment \cite{Loudon,Paulbook,SInglePhot}. For the second, we remark that the quadratic form of our Hamiltonian means that it is 
straightforward to obtain the eigenmodes of the network, which enables the full solution of the Heisenberg equations of motion. Here, however, we restrict our attention to the evolution of a single photon in the Schr\"{o}dinger picture.

Our Hamiltonian commutes with the total number of photons $\sum_{i} a_i^\dagger a_i$, and it follows that the relevant state space is spanned by the set of single-photon states $\ket{i} \coloneqq a_i^\dagger \ket{0}$, where $\ket{0}$ is the multimode vacuum
(zero photon) state.  It then follows that the matrix $G \coloneqq (c_{ij})$ is the representation matrix of $H$ in the orthonormal basis $(\ket{i})_i$, and we interpret $G$ as a weighted adjacency matrix of an undirected graph.  We divide the state space into two subspaces, one associated with each of the bubbles. Let the first $m$ single-photon states, $\ket{1}, \ldots, \ket{m}$, span the Hilbert space $\mathcal{H_A}$ associated with Bubble $A$ and the following $n$, i.e. $\ket{m + 1}, \ldots, \ket{m + n}$, span the Hilbert space $\mathcal{H}_B$ associated with Bubble $B$. The full state space of dimension $m + n$ is $\mathcal{H} \coloneqq \mathcal{H}_A \oplus \mathcal{H}_B$.

We complete the description of our model by introducing the coupling strengths.  All onsite energies $c_{ii}$, which only rescale time, are set to unity. All interaction strengths within $A$ are equal to $a$ and all interaction strengths with $B$ are equal to $b$. Finally, the two subgraphs are fully interconnected with a coupling strength $c$. The resulting weighted adjacency matrix is given by
\begin{align}
\label{eqn:adjacency_matrix}
G & =
\left( 
\begin{array}{cc}
  G_A & \begin{array}{cccc}
       c & c & \ldots & c \\
       c & c & \ldots & c \\
       \vdots & \vdots & & \vdots \\
       c & c & \ldots & c
   \end{array} \\
   \begin{array}{cccc}
       c & c & \ldots & c \\
       c & c & \ldots & c \\
       \vdots & \vdots & & \vdots \\
       c & c & \ldots & c
   \end{array} & G_B
\end{array}
\right)
\begin{array}{l}
       \left. \vphantom{\begin{array}{cccc}
       c & c & \ldots & c \\
       c & c & \ldots & c \\
       \vdots & \vdots & & \vdots \\
       c & c & \ldots & c 
       \end{array}} \right\} m \\
       \left. \vphantom{\begin{array}{cccc}
         c & c & \ldots & c \\
       c & 1 & \ldots & c \\
       \vdots & \vdots & & \vdots \\
       c & c & \ldots & c \end{array}} \right\} n
\end{array},
\end{align}
where $G_A$ and $G_B$ are the weighted adjacency matrices of the respective subgraphs.

To connect the situation described above to previous examples, let us note that the star graph from Figure \ref{subfig:star_graph} is obtained for $m = 1$, $n = 6$, $a = c = 1$, $b = 0$, $d_A = 0$ and arbitrary $d_B \in \coll{1, \ldots, 5}$; the fully connected graph from Figure \ref{subfig:full_graph} is represented by $m = 1$, $n = 6$, $a = b = c = 1$, $d_A = 0$ and $d_B = 5$; the missing-link scenario in Figure \ref{subfig:missing_link} has $m = 2$, $n = 5$, $a = 0$, $b = c = 1$, $d_B = 4$ and $d_A \in \coll{0, 1}$. The example from Figure \ref{fig:clique_to_cycle}, a single-edge graph coupled to a $4$-cycle, has $m = 2$, $n = 4$, $d_A = 1,$ $d_B = 2$ (the coupling strengths $a, b, c$ have not been specified).

We shall refer to the collection $m$, $n$, $d_A$, $d_B$, $a$, $b$, $c$ as graph parameters.

\begin{figure}[t]
\centering
\begin{subfigure}{0.32\textwidth}
\centering
\begin{tikzpicture}
\usetikzlibrary{positioning,shapes.geometric} 
\node[draw=none,minimum size=4cm,regular polygon,regular polygon sides=7] (a) {};
\foreach \x in {2, ..., 7} 
{\draw[-,thick,black] (a.corner 1)--(a.corner \x);}
\fill[fill = red] (a.corner 1) circle[radius=5pt]; 
\foreach \x in {2, ..., 7} 
\fill[fill = blue] (a.corner \x) circle[radius=5pt];
\end{tikzpicture}
\caption{Star graph with an excitation initially placed in the hub (red node). The dynamics are full oscillations between the hub node and the uniform superposition on the rim (blue nodes).}
\label{subfig:star_graph}
\end{subfigure}
\begin{subfigure}{0.32\textwidth}
\centering
\begin{tikzpicture}
\usetikzlibrary{positioning,shapes.geometric} 
\node[draw=none,minimum size=4cm,regular polygon,regular polygon sides=7] (a) {};
\foreach \x in {2, ..., 7} 
{\draw[-,thick,blue] (a.corner 1)--(a.corner \x);}
\foreach \x in {3, ..., 7} 
{\draw[-,thick,blue] (a.corner 2)--(a.corner \x);}
\foreach \x in {4, ..., 7} 
{\draw[-,thick,blue] (a.corner 3)--(a.corner \x);}
\foreach \x in {5, ..., 7} 
{\draw[-,thick,blue] (a.corner 4)--(a.corner \x);}
\foreach \x in {6, ..., 7} 
{\draw[-,thick,blue] (a.corner 5)--(a.corner \x);}
\foreach \x in {7, ..., 7} 
{\draw[-,thick,blue] (a.corner 6)--(a.corner \x);}
\fill[fill = red] (a.corner 1) circle[radius=5pt];
\foreach \x in {2, ..., 7} 
\fill[fill = blue] (a.corner \x) circle[radius=5pt];
\end{tikzpicture}
\caption{Fully connected graph with an excitation initially placed in the red node. Asymptotically, with increasing number of blue nodes, the excitation is trapped in the initial node.}
\label{subfig:full_graph}
\end{subfigure}
\begin{subfigure}{0.32\textwidth}
\centering
\begin{tikzpicture}
\usetikzlibrary{positioning,shapes.geometric} 
\node[draw=none,minimum size=4cm,regular polygon,regular polygon sides=7] (a) {};
\foreach \x in {2, ..., 3} 
{\draw[-,thick,blue] (a.corner 1)--(a.corner \x);}
\foreach \x in {4, ..., 5} 
{\draw[-,thick,black] (a.corner 1)--(a.corner \x);}
\foreach \x in {6, ..., 7} 
{\draw[-,thick,blue] (a.corner 1)--(a.corner \x);}
\foreach \x in {3, ..., 3} 
{\draw[-,thick,blue] (a.corner 2)--(a.corner \x);}
\foreach \x in {4, ..., 5} 
{\draw[-,thick,black] (a.corner 2)--(a.corner \x);}
\foreach \x in {6, ..., 7} 
{\draw[-,thick,blue] (a.corner 2)--(a.corner \x);}
\foreach \x in {4, ..., 5} 
{\draw[-,thick,black] (a.corner 3)--(a.corner \x);}
\foreach \x in {6, ..., 7} 
{\draw[-,thick,blue] (a.corner 3)--(a.corner \x);}
\foreach \x in {4, ..., 5} 
{\draw[-,thick,black] (a.corner 6)--(a.corner \x);}
\foreach \x in {7, ..., 7} 
{\draw[-,thick,blue] (a.corner 6)--(a.corner \x);}
\foreach \x in {4, ..., 5} 
{\draw[-,thick,black] (a.corner 7)--(a.corner \x);}
\draw[-,thick,red!20!white] (a.corner 4)--(a.corner 5); 
\foreach \x in {1, ..., 3} 
\fill[fill = blue] (a.corner \x) circle[radius=5pt];
\foreach \x in {4, ..., 5} 
\fill[fill = red] (a.corner \x) circle[radius=5pt];
\foreach \x in {6, ..., 7} 
\fill[fill = blue] (a.corner \x) circle[radius=5pt];
\end{tikzpicture}
\caption{Fully connected graph without one link (light red). Excitation is initially in one of the red nodes. Increasing number of blue nodes, leads to fuller oscillations between the red nodes.}
\label{subfig:missing_link}
\end{subfigure}
\caption{Three graphs considered in our previous work \cite{Buridan_paper}. There, all coupling strengths were equal. In the current paper this situation is generalized to two regular subgraphs (red and blue nodes) having uniform coupling strength within themselves (red and blue edges) and uniformly coupled to one another (black edges). This allows us to cover a number of interesting situations beyond those considered originally.}
\label{fig:buridan_examples}
\end{figure}

\begin{figure}
\centering
\begin{tikzpicture}[
    scale=1,
    every node/.style={circle, draw=none, minimum size=10pt, inner sep=0pt}
]

\node[fill=red] (r1) at (-2,1) {};
\node[fill=red] (r2) at (-2,-1) {};

\node[fill=blue] (b1) at (2,2) {};
\node[fill=blue] (b2) at (4,0) {};
\node[fill=blue] (b3) at (2,-2) {};
\node[fill=blue] (b4) at (0,0) {};

\draw[-, red, thick] (r1) -- (r2);

\draw[-, blue, thick] (b1) -- (b2) -- (b3) -- (b4) -- (b1);

\draw[-, black, thick] (r1) -- (b1);
\draw[-, black, thick] (r1) -- (b2);
\draw[-, black, thick] (r1) -- (b3);
\draw[-, black, thick] (r1) -- (b4);

\draw[-, black, thick] (r2) -- (b1);
\draw[-, black, thick] (r2) -- (b2);
\draw[-, black, thick] (r2) -- (b3);
\draw[-, black, thick] (r2) -- (b4);
\end{tikzpicture}
\caption{An example graph comprised of two regular subgraphs. The red subgraph on the left consists of two connected vertices. The blue subgraph on the right is a cycle consisting of four nodes. The two subgraphs are fully interconnected with black edges.}
\label{fig:clique_to_cycle}
\end{figure}

\section{Spectral analysis and dynamics}
\label{sec:spectral_analysis}
In order to analyze the evolution of the excitation within the network we need to explore the spectral properties of $G$. As $A$ is regular, $G_A$ has the normalized uniform vector $\ket{\phi} \coloneqq \frac{1}{\sqrt{m}} \uv_m$ as an eigenvector to the eigenvalue $\lambda_A \coloneqq d_A a + 1$ --- we shall refer to such eigenvector-eigenvalue pairs as uniform eigenpairs. Let us fix the notation and complement $\ket{\phi}$ with $\ket{\phi_i}$, $i \in \coll{1, \ldots, m - 1}$, to obtain an orthonormal basis of eigenvectors of $G_A$ corresponding the eigenvalues $\mu_1, \ldots, \mu_{m - 1}$. Because of this orthogonality, the sum of components of each $\ket{\phi_i}$ is zero and hence, just by extending $\ket{\phi_i}$ with zeros, we immediately obtain orthogonal eigenvectors $\ket{\psi_i} \coloneqq (\phi_i, 0, \ldots, 0)$ of $G$ corresponding to the eigenvalues $\lambda_i \coloneqq \mu_i$, $i \in \coll{1, \ldots, m - 1}$. We also extend $\ket{\phi}$ to $\ket{\psi_A} \coloneqq (\phi, 0, \ldots, 0)$, noting that it is \emph{not} an eigenvector of $G$, unless $c = 0$.

The same analysis applies to Subgraph $B$: Its weighted adjacency matrix $G_B$ has the uniform eigenpair consisting of $\ket{\chi} \coloneqq \frac{1}{\sqrt{n}} \uv_n$ and $\lambda_B \coloneqq d_B b + 1$. The remaining $n - 1$ eigenvectors $\ket{\chi_j}$ of $G_B$ extend to the eigenvectors $\ket{\psi_{j + m + 1}} \coloneqq (0, \ldots, 0, \chi_j)$ of $G$ corresponding to the eigenvalues $\lambda_{j + m + 1} \coloneqq \nu_j$, $j \in \coll{1, \ldots, n - 1}$, of $G_B$. Again, we extend $\ket{\chi}$ to $\ket{\psi_B} \coloneqq (0, \ldots, 0, \chi)$.

Because all entries in \eqref{eqn:adjacency_matrix} are real and because $\ket{\psi_1}, \ldots, \ket{\psi_{m - 1}}, \ket{\psi_{m + 2}}, \ldots, \ket{\psi_{m + n}}$ are orthogonal, the remaining two orthogonal eigenvectors of $G$ must be a rotation of $(\ket{\psi_A}, \ket{\psi_B})$. We therefore make the ansatz
\begin{align}
\label{eqn:ansatz}
\ket{\psi} & = \gamma \ket{\psi_A} + \ket{\psi_B}.
\end{align}
The eigenvalue problem $H \ket{\psi} = \lambda \ket{\psi}$ then reduces to the system of equations
\begin{align}
\label{eqn:eigenvalue_problem}
\gamma \lambda_A + \tilde{c} & = \lambda \gamma, &
\tilde{c} \gamma + \lambda_B & = \lambda,
\end{align}
where $\tilde{c} \coloneqq c \sqrt{m n}$. We obtain the two remaining eigenvalues
\begin{align}
\label{eqn:perturbed_eigenvalues}
\lambda_{\pm} & =
\frac{\lambda_A + \lambda_B}{2} \pm \tilde{c} \sqrt{1 + \eta^2}, &
\eta & \coloneqq \frac{\lambda_A - \lambda_B}{2 \tilde{c}},
\end{align}
and their associated normalized eigenvectors in the form
\begin{align}
\label{eqn:perturbed_eigenvectors}
\ket{\psi_{\pm}} & =
\frac{\gamma_\pm}{\sqrt{1 + \gamma_\pm^2}} \ket{\psi_A} +
\frac{1}{\sqrt{1 + \gamma_\pm^2}} \ket{\psi_B}, &
\gamma_\pm & = \eta \pm \sqrt{1 + \eta^2}.
\end{align}
For a more succinct presentation we will sometimes denote $\lambda_m \coloneqq \lambda_+$, $\lambda_{m + 1} \coloneqq \lambda_-$, $\ket{\psi_m} \coloneqq \ket{\psi_+}$, and $\ket{\psi_{m + 1}} \coloneqq \ket{\psi_-}$.

The above construction leads directly to two simplifications. The first is the pair of identities:
\begin{align}
\label{eqn:scalar_products}
C & \coloneqq
\braket{\psi_A | \psi_+} =
\frac{\gamma_+}{\sqrt{1 + \gamma_+^2}} =
\frac{1}{\sqrt{1 + \gamma_-^2}} = \braket{\psi_B | \psi_-}, &
S & \coloneqq
\braket{\psi_B | \psi_+} = \frac{1}{\sqrt{1 + \gamma_+^2}} =
\frac{-\gamma_-}{\sqrt{1 + \gamma_-^2}} = -\braket{\psi_A | \psi_-},
\end{align}
i.e. $\ket{\psi_+}$ indeed results from the rotation of $\ket{\psi_A}$ in the $\ket{\psi_A}$-$\ket{\psi_B}$-plane by the angle $\varphi \coloneqq \arccot(\gamma_+)$. A corresponding statement is true for $\ket{\psi_-}$ and $\ket{\psi_B}$. A diagrammatic representation of these
properties is given in Figure \ref{fig:rotated_eigenvectors}.
\begin{figure}
\centering
\begin{tikzpicture}[x=0.0833\linewidth, y=0.0833\linewidth]
\pgfmathsetmacro{\GammaPlus}{5./4.}
\pgfmathsetmacro{\GammaMinus}{-1. / \GammaPlus}
\pgfmathsetmacro{\CosPhi}{\GammaPlus / veclen(1., \GammaPlus)}
\pgfmathsetmacro{\SinPhi}{1. / veclen(1., \GammaPlus)}
\pgfmathsetmacro{\RotationAngle}{atan(\SinPhi/\CosPhi)}

\pgfmathsetmacro{\AxesScale}{4.}
\pgfmathsetmacro{\TicksHeight}{.1}
\pgfmathsetmacro{\XMax}{6.}
\pgfmathsetmacro{\XMin}{-6.}
\pgfmathsetmacro{\YMax}{6.}
\pgfmathsetmacro{\YMin}{-1.}

\pgfmathsetmacro{\OriginX}{0.}
\pgfmathsetmacro{\OriginY}{0.}
\pgfmathsetmacro{\CosPhiCanvas}{\CosPhi * \AxesScale}
\pgfmathsetmacro{\SinPhiCanvas}{\SinPhi * \AxesScale}
\pgfmathsetmacro{\GammaPlusCanvas}{\GammaPlus * \AxesScale}
\pgfmathsetmacro{\GammaMinusCanvas}{\GammaMinus * \AxesScale}

\coordinate (origin) at (\OriginX, \OriginY);
\coordinate (psi_a) at (\AxesScale, 0.);
\coordinate (psi_b) at (0., \AxesScale);
\coordinate (psi_plus) at (\CosPhiCanvas, \SinPhiCanvas);
\coordinate (psi_minus) at (-\SinPhiCanvas, \CosPhiCanvas);
\coordinate (projected_psi_plus) at (\GammaPlusCanvas, \AxesScale);
\coordinate (projected_neg_psi_plus) at (-\CosPhi / \SinPhi, \YMin);
\coordinate (projected_psi_minus) at (\GammaMinusCanvas, \AxesScale);

\draw [->, ultra thin] (\XMin, \OriginY) -- (\XMax, \OriginY);
\draw [->, ultra thin] (\OriginX, \YMin) -- (\OriginX, \YMax);

\draw [-, thick] (\GammaMinusCanvas, \TicksHeight) -- (\GammaMinusCanvas, -\TicksHeight) node[below]{$\gamma_-$};
\draw [-, thick] (-\SinPhiCanvas, \TicksHeight) -- (-\SinPhiCanvas, -\TicksHeight) node[below]{$-S$};
\draw [-, thick] (\CosPhiCanvas, \TicksHeight) -- (\CosPhiCanvas, -\TicksHeight) node[below right]{$C$};
\draw [-, thick] (\GammaPlusCanvas, \TicksHeight) -- (\GammaPlusCanvas, -\TicksHeight) node[below]{$\gamma_+$};

\draw [-, thick] (-\TicksHeight, \SinPhiCanvas) -- (\TicksHeight, \SinPhiCanvas) node[anchor=south west, above right]{$S$};
\draw [-, thick] (\TicksHeight, \CosPhiCanvas) -- (-\TicksHeight, \CosPhiCanvas) node[ anchor=south east, above left]{$C$};

\usetikzlibrary{angles, quotes}
\draw pic ["$\varphi$", draw, -latex, angle radius=2.5cm, angle eccentricity=.75] {angle = psi_a--origin--psi_plus};
\draw pic [".", draw, thick, angle eccentricity=.5] {right angle = psi_minus--origin--projected_neg_psi_plus};

\draw [-, dashed, ultra thin] (\XMin, \AxesScale) -- (\XMax, \AxesScale);
\draw [-, dotted, thin] ( \OriginX, \SinPhi * \AxesScale) -- (psi_plus);
\draw [-, dotted, thin] ( \OriginX, \CosPhi * \AxesScale) -- (psi_minus);

\draw [-, dashed, ultra thin] (\GammaMinus * \AxesScale, \OriginY) -- (\GammaMinus * \AxesScale, \YMax);
\draw [-, dotted, thin] (-\SinPhi * \AxesScale, \OriginY) -- (psi_minus);
\draw [-, dotted, thin] (\CosPhi * \AxesScale, \OriginY) -- (psi_plus);
\draw [-, dashed, ultra thin] (\GammaPlus * \AxesScale, \OriginY) -- (\GammaPlus * \AxesScale, \YMax);

\draw [-, dashed, thick] (psi_plus) -- (projected_psi_plus);
\draw [-, dashed, thick] (psi_minus) -- (projected_psi_minus);
\draw [-, dashed, thick] (origin) -- (projected_neg_psi_plus);

\draw [-, dashed, ultra thin] (\AxesScale, .0) arc(0:180:\AxesScale);

\draw [-latex, thick, blue] (origin) -- (psi_a) node[midway, below]{$\ket{\psi_A}$};
\draw [-latex, thick, blue] (origin) -- (psi_b) node[midway, sloped, above left]{$\ket{\psi_B}$};
\draw [-latex, thick, red] (origin) -- (psi_plus) node[midway, sloped, above right]{$\ket{\psi_+}$};
\draw [-latex, thick, red] (origin) -- (psi_minus) node[midway, sloped, below]{$\ket{\psi_-}$};
\end{tikzpicture}
\caption{Off-diagonal interaction blocks in Equation \eqref{eqn:adjacency_matrix} result in the rotation of the original, orthogonal eigenvectors $\ket{\psi_A}$, $\ket{\psi_B}$ by the angle $\varphi$.}
\label{fig:rotated_eigenvectors}
\end{figure}
The second one is that while the eigenvalues $\mu_1, \ldots, \mu_{m - 1}$ and the corresponding eigenvectors $\ket{\phi_1}, \ldots, \ket{\phi_{m - 1}}$ depend on the particular choice of Subgraph $A$, the uniform eigenpair $\left(\frac{1}{\sqrt{m}} \uv_m, d_A a + 1\right)$ does not require detailed knowledge of $A$. Moreover, a corresponding statement is true for Subgraph $B$. The second simplification is significant, since, as it will be shown, the uniform eigenpairs alone determine the transport properties between the two subgraphs.

We begin our analysis without assuming any detailed knowledge on the structure of the Subgraphs $A$ and $B$, i.e. having only the graph parameters at our disposal. We recall that pure states evolve via the Schrödinger propagator
\begin{align}
\label{eqn:schroedinger_propagator}
U(t) & = e^{-\iu t H} =
\left(\sum_{k = 1}^{m - 1} e^{-\iu t \lambda_k} \ket{\psi_k} \bra{\psi_k}\right) +
\left(e^{-\iu t \lambda_+} \ket{\psi_+} \bra{\psi_+}\right) +
\left(e^{-\iu t \lambda_-} \ket{\psi_-} \bra{\psi_-}\right) +
\left(\sum_{k = m + 2}^{m + n} e^{-it \lambda_k} \ket{\psi_k} \bra{\psi_k}\right),
\end{align}
where we, as usual, set $\hbar = 1$. Hence, the Hilbert spaces $\mathcal{H}_{A'}$, $\mathcal{H}_{B'}$ given by
\begin{align*}
\mathcal{H}_{A'} & \coloneqq \Span \coll{\ket{\psi_1, \ldots, \psi_{m - 1}}} \subseteq
\Span \coll{\ket{\psi_1}, \ldots, \ket{\psi_{m - 1}}, \ket{\psi_A}} = \mathcal{H}_A, \\
\mathcal{H}_{B'} & \coloneqq \Span \coll{\ket{m + 2}, \ldots, \ket{m + n}} \subseteq 
\Span \coll{\ket{\psi_B}, \ket{\psi_{m + 2}}, \ldots, \ket{\psi_{m + n}}} = \mathcal{H}_B.
\end{align*}
are invariant under the Schrödinger evolution. Let us further fix the notation and introduce the projectors
\begin{align}
\label{eqn:regional_measurements}
M_{A'} & \coloneqq \sum_{k = 1}^{m - 1} \ket{\psi_i} \bra{\psi_i}, &
M_A & \coloneqq M_{A'} + \ket{\psi_A} \bra{\psi_A}; &
M_{B'} & \coloneqq \sum_{k = m + 2}^{m + n} \ket{\psi_i} \bra{\psi_i}, &
M_B & \coloneqq M_{B'} + \ket{\psi_B} \bra{\psi_B}.
\end{align}

Everything is now determined in order to calculate the transfer probability $P_{A \to B}(\rho, t)$ of measuring an excitation in Subgraph $B$ at a time $t$, assuming that the initial mixed-state $\rho$ had support in Subgraph $A$ (that is $\rho = M_A \rho = \rho M_A = M_A \rho M_A$ --- all equalities are equivalent). The density matrix evolves as $\rho(t) = U(t) \rho U(t)^\dagger$ and the desired probability is given by
\begin{align}
\label{eqn:general_transition_probability}
P_{A \to B}(\rho, t) & = \Tr \left[M_B \rho(t)\right] =
\Tr \left[M_B U(t) M_A \rho M_A U(t)^\dagger M_B \right],
\end{align}
where we used the cyclic permutation property of the trace, the fact that $M_B$ is a projection, and that the support of $\rho$ is in $A$ for the second equality. Using the identities \eqref{eqn:regional_measurements} and the orthogonality properties of our basis states we obtain
\begin{align*}
M_B U(t) M_A & =
\left(
\braket{\psi_B | \psi_+} \braket{\psi_+ | \psi_A} e^{-\iu t \lambda_+} + 
\braket{\psi_B | \psi_-} \braket{\psi_- | \psi_A} e^{-\iu t \lambda_-}
\right) \ket{\psi_B} \bra{\psi_A}.
\end{align*}
The ensuing scalar products are given in Display \eqref{eqn:scalar_products}. Inserting the above into Equation \eqref{eqn:general_transition_probability}, and using the cyclic permutation property of the trace gives the transition probability between the subgraphs
\begin{align}
\label{eqn:transition_probability}
P_{A \to B}(\rho, t) & =
4 C^2 S^2 \frac{1 - \cos((\lambda_+ - \lambda_-)t)}{2} \braket{\psi_A | \rho | \psi_A}.
\end{align}
Let us note that the maximal transition probability $P_\text{trans} \coloneqq 4 C^2 S^2$ has the geometric representation $\sin^2(2 \varphi)$, and
\begin{align}
\label{eqn:maximal_transition_probability}
P_\text{trans} & =
\frac{4 \gamma_+^2 \gamma_-^2}
{\left(1 + \gamma_+^2\right)\left(1 + \gamma_-^2\right)} =
\frac{1}{1 + \eta^2},
\end{align}
where we employed a Vieta’s formula $\gamma_+ \gamma_- = -1$ for the last equality. Let us also
introduce the complementary probability 
\begin{align}
\label{eqn:minimal_trapping}
P_\text{trap} & \coloneqq 1 - P_\text{trans} = \frac{1}{1 + \frac{1}{\eta^2}}
\end{align}
as the minimal trapping probability. The reason for introducing it will become apparent in Section \ref{subsec:trapping_subgraphs}. At this point we merely observe that
\begin{align}
\label{eqn:fraction_of_probabilities}
\eta^2 & = \frac{P_\text{trap}}{P_\text{trans}},
\end{align}
i.e. a link between the abstract geometric parameter $\eta$ on the one hand and the minimal trapping and its complementary maximal transition probability on the other hand.

The return probability $P_{A \to A}(\rho, t)$, i.e. the probability of rediscovering an excitation in Subgraph $A$ at a time $t$, is 
\begin{align}
\label{eqn:return_probability}
P_{A \to A}(\rho, t) & = \Tr\left[M_A \rho(t) \right] = \Tr[(\ident - M_B) \rho] =
1 - P_{A \to B}(\rho, t),
\end{align}
which follows just by exploiting the linearity of the trace and the fact that $M_A + M_B = \ident$, where $\ident$ denotes the identity operator.

We will not discuss $P_{B \to A}(\rho, t)$ or $P_{B \to B}(\rho, t)$ as they are obtained by simply swapping $A$ and $B$ in the discussion above. We do note however, that the maximal transition probability is the same for both subgraphs: Swapping the subgraphs flips the sign of $\eta$, but $P_\text{trans}$, as a function of $\eta^2$, is unchanged. The angular frequency
\begin{align}
\label{eqn:angular_frequency}
\omega & \coloneqq \lambda_+ - \lambda_- = \frac{2 \tilde{c}}{\sqrt{P_\text{trans}}},
\end{align}
also merely flips its physically irrelevant sign, when the subgraphs are swapped. Finally, we notice from the above that $\omega$ and $P_\text{trans}$ can be set independently. More precisely we can rescale all coupling constants by the same factor, which will keep $P_\text{trans}$ invariant but adjust $\omega$ to an arbitrary value.

\section{Effects Arising at the Level of Subgraphs}
\label{sec:subgraph_effects}
We are primarily interested in two somewhat complementary effects already observed in \cite{Buridan_paper}: full oscillations and trapping. Full oscillations occur when the excitation initially localized in one of the subgraphs is measured with probability one in the other subgraph at a later point in time. Trapping occurs when an excitation localized in one of the subgraphs, at least in part, stays confined to that subgraph. 

\subsection{Full Oscillations}
\label{subsec:full_oscillations_subgraphs}
Of course, an excitation initially localized in Subgraph $A$ may remain there forever, without any oscillation. This is the case, e.g., for $m \geq 2$ and $\rho = \ket{\psi_1} \bra{\psi_1}$. In general, however, due to Equations \eqref{eqn:return_probability} and \eqref{eqn:transition_probability}, the return probability
\begin{align}
\label{eqn:return_probability_concrete}
P_{A \to A}(\rho, t) & =
1 - P_\text{trans} \frac{1 - \cos(\omega t)}{2} \braket{\psi_A | \rho | \psi_A}
\end{align}
will oscillate with angular frequency $\omega$ between $1$ and its minimal value $P_{A \to A}(\rho) \coloneqq 1 - P_\text{trans} \braket{\psi_A | \rho | \psi_A}$. Full oscillations occur exactly when $P_{A \to A}(\rho,t) = 0$ for some $t$, i.e. when $P_\text{trans} = 1 = \braket{\psi_A | \rho | \psi_A}$. By Lemma \ref{lem:convex_max} the second equality is true if and only if $\rho = \ket{\psi_A} \bra{\psi_A}$. By Equation \eqref{eqn:maximal_transition_probability} the first equality is fulfilled if and only if the geometric parameter $\eta$ vanishes (i.e. when the rotation angle $\varphi$ is $\frac{\pi}{4}$). From $\eta = \frac{d_A a - d_B b}{2 c \sqrt{m n}}$ we can infer that this happens, iff
\begin{align}
\label{eqn:full_oscillations_condition}
d_A a = d_B b.
\end{align}
In this case, by Equation \eqref{eqn:angular_frequency}, the angular frequency of the full oscillations is given by
\begin{align}
\label{eqn:frequency_full_oscillations}
\omega & = 2 c \sqrt{m n}.
\end{align}
In particular, it can be adjusted freely by simply varying the inter-subgraph coupling constant $c$ only. This allows for a, perhaps unexpected, degree of control: we can choose the ratio between the interaction strengths, $\frac{a}{b}$, as $\frac{d_B}{d_A}$ to obtain full oscillations. Furthermore, we can tune $\omega$, by adjusting $c$, to a sufficiently small value. In this way, at least for a certain amount of time, we can keep the excitation almost fully in one of the subgraphs.

Asymptotically, we obtain full oscillations in the limit $c \to \infty$, if all other graph parameters are kept fixed. Let us note also, from Equation \eqref{eqn:angular_frequency}, that among all subgraphs of the same size with the same inter-subgraph coupling strength $c$, the ones exhibiting full oscillations have the lowest oscillation frequency.

\subsection{Trapping}
\label{subsec:trapping_subgraphs}
As mentioned above, trapping is somewhat complementary to full oscillations. While we need the minimal return probability $P_{A \to A}(\rho,t)$ to vanish for some time $t$ to give a full oscillation, trapping occurs if $P_{A \to A}(\rho,t)$ stays away from zero, i.e. a non-vanishing part of the wave function always remains in Subgraph $A$. From Equation \eqref{eqn:return_probability_concrete} we immediately see that complete trapping, i.e. $P_{A \to A}(\rho,t) = 1$, occurs if $\rho$ is supported on the invariant Hilbert space $\mathcal{H}_{A'}$. In contrast, minimal trapping occurs if $\rho = \ket{\psi_A}\bra{\psi_A}$. From Equation \eqref{eqn:return_probability} we immediately obtain
\begin{align}
P_\text{trap} & = 1 - P_\text{trans} \leq P_{A \to A}(\rho, t)
\end{align}
with equality attained if and only if $\rho = \ket{\psi_A} \bra{\psi_A}$ and $t \in \frac{\pi}{\omega} (2 \Intgrs + 1)$. This fact motivates our earlier designation of $P_\text{trap}$ as the minimal trapping probability.

The above shows that minimal trapping, attained for $\rho = \ket{\psi_A} \bra{\psi_A}$, is never a complete trapping. However, it becomes asymptotically complete for $\abs{\eta} \to \infty$ (i.e. for $\varphi \to 0^+$ or $\varphi \to \frac{\pi}{2}^-$). Notice that the geometric parameter $\eta$ can be written as
\begin{align}
\eta & =
\frac{1}{2} \left( \frac{a}{c} \sqrt{\frac{m}{n}} \frac{d_A}{m} -
\frac{b}{c} \sqrt{\frac{n}{m}} \frac{d_B}{n} \right),
\end{align}
revealing a competition between the relative contributions of Subgraphs $A$ and $B$: The term to the left of the minus sign corresponds to Subgraph $A$ and pushes $\varphi$ towards $0$, while the term to the right of the minus sign belongs to Subgraph $B$ and pushes $\varphi$ towards $\frac{\pi}{2}$. Each of these contributions is a product of the relative coupling strength, a size proportion, and the asymptotic subgraph’s density. On the one hand, we obtain $\abs{\eta} \to \infty$, for $a \to \infty$ or $b \to \infty$ with other graph parameters being fixed. On the other hand, for fixed non-zero coupling strengths, we obtain an exact condition for asymptotic complete minimal trapping: One of the subgraphs must be much larger than the other, say $\frac{n}{m} \to \infty$, and this large subgraph must be sufficiently dense, i.e. $\frac{d_B}{\sqrt{m n}} \to \infty$. It is interesting to note that it does not matter which of the subgraphs satisfies this condition. The asymptotic complete minimal trapping will occur in both of them.

As the minimal return probability $P_{A \to A}(\rho)$ varies significantly with the initial state $\rho$, we may characterize trapping by taking the average with respect to all possible initial states supported on Subgraph $A$. To that end, let us consider any unitary-invariant probability measure $\mu$ on the density matrices supported on Subgraph $A$. From Equation \eqref{eqn:return_probability_concrete} we immediately recognize that the return probability $P_{A \to A}(\rho, t)$ is linear in the initial condition $\rho$. By Lemma \ref{lem:unitary_invariant_mean}, we obtain that the average return probability $\overline{P}_{A \to A}(t)$ coincides with the minimal return probability for the initial state $\rho = \frac{1}{m} M_A$, i.e. the maximally mixed state living only on the Hilbert space $\mathcal{H}_A$. We define the average trapping probability $\overline{P}_\text{trap}$ as the minimum in time of $\overline{P}_{A \to A}(t)$, i.e.
\begin{align}
\label{eqn:mean_minimal_trapping}
\overline{P}_{A \to A}(t) & =
P_{A \to A} \left( \frac{1}{m} M_A, t \right) =
1 - \frac{P_\text{trans}}{m} \frac{1 - \cos(\omega t)}{2} \geq 1 - \frac{P_\text{trans}}{m} \eqqcolon \overline{P}_{\text{trap}}.
\end{align}
Let us note that $\overline{P}_\text{trap}$ coincides with the average of the minimal return probabilities for eigenbasis states with complete trapping $P_{A \to A}\left(\ket{\psi_1} \bra{\psi_1}\right) = \ldots = P_{A \to A}\left(\ket{\psi_{m - 1}} \bra{\psi_{m - 1}}\right) = 1$ and the minimal return probability for the transfer state $P_{A \to A}\left(\ket{\psi_A} \bra{\psi_A}\right) = 1 - P_\text{trans}$. We conclude by remarking that $\overline{P}_\text{trap}$ approaches $1$ with increasing size of Subgraph $A$, uniformly in size and structure of Subgraph $B$.

\section{Effects Arising at the Level of Nodes}
\label{sec:node_effects}
In the previous section we discussed the effects of trapping and full oscillations at the level of subgraphs. Nonetheless, we cannot immediately tell from the above whether an excitation inserted into a single node of Subgraph $A$ will fully oscillate in and out of the node, even if we had complete oscillations at the level of subgraphs. Here, we extend our discussion to effects on the level of nodes. To that end, we will need to assume more knowledge of the subgraphs’ structure.

\subsection{Single-Node Subgraphs}
\label{subsec:single-node}
We begin with the readily accessible case in which Subgraph $A$ consists of only a single node. All subgraph-level effects immediately become node-level effects and we can simplify the arising formulae. The only initial state supported on $A$ is $\ket{\psi_A} = \ket{1}$ and the invariant Hilbert space $\mathcal{H}_{A'}$ is trivial. Therefore, full oscillations and complete trapping are exactly complementary --- the higher the (minimal) trapping the less pronounced the oscillations. More precisely, the return probability is given by
\begin{align}
\label{eqn:return_probability_one_node}
P_{1 \to 1}(t) & = 1 - \frac{1}{1 + \eta^2} \frac{1 - \cos(\omega t)}{2}, &
\eta^2 & = \frac{d_B^2 b^2}{4 c^2 n}, &
\omega & = \sqrt{4 c^2 n + d_B^2 b^2}.
\end{align}

Let us stress that the minimal return probability is a monotonic function of the parameter $\eta^2$. In particular, two networks with equal parameter $\eta^2$ exhibit the same minimal trapping probability. For example, we obtain $\eta^2 = 1$, i.e. $P_\text{trap} = \frac{1}{2}$, for $n = 4$, $d_B = 3$, $b = 4$, $c = 3$, and also for $n = 9$, $d_B = 9$, $b = 3$, $c = 4$, albeit the oscillation frequency $\omega = \frac{48}{\sqrt{2}}$ is twice as high in the second case compared to the first one.

In general, the oscillation frequency $\omega = \sqrt{4 c^2 n + d_B^2 b^2}$ can be increased if we either increase the size $n$, the coupling $b$ or the connectivity $d_B$ in Subgraph $B$ or if we increase the coupling strength $c$ between the subgraphs. We once again emphasize that any parameter pair $(\eta^2, \omega) \in \Reals^+ \times \Reals^+$ is achievable by tuning the graph parameters. In particular, the minimal trapping probability does not fix the oscillation frequency. More precisely, the minimal set of parameters is $d_B b$, $c^2 n$ meaning that any two graphs have the same pair $(d_B b, c^2 n)$ iff their return probability functions coincide.

Clearly, with larger $\eta^2$ we get a higher level of trapping. We see from Equation \eqref{eqn:return_probability_one_node} that while the overall mutual interaction among vertices of Subgraph $B$ enhances trapping, an increase in the subgraph’s size or its coupling with the initial vertex tends to facilitate the release of the excitation. It is the interference within Subgraph $B$ which confines an excitation (at least partially) to the only node of Subgraph $A$. 

On the one hand, full oscillations and complete absence of trapping occur iff $d_B = 0$ or $b = 0$, i.e. for $\eta = 0$. This is in full agreement with the result we obtained in \cite{Buridan_paper}: there is no trapping for the star subgraph, which corresponds to $b = 0$, $c = 1$ here. On the other hand, we obtain asymptotically complete trapping for $\abs{\eta} \to \infty$. For a fixed Subgraph $B$ this happens for $\frac{b}{c} \to \infty$. This process is illustrated in Figure \ref{fig:coupling_strength}.

Let us consider also the following limit: We fix the coupling strengths and the family of Subgraph $B$ (e.g. a clique, or a star-shaped graph) but let its size $n$ grow. We can distinguish three cases:

The first one is $d_B \gg \sqrt{n}$ for which we obtain asymptotically complete trapping. This is the case, for example, for a clique because its vertices have degree $n - 1$. Then, in the limit of large $n$,
\begin{align*}
\eta^2 & = \frac{(n - 1)^2 b^2}{4 c^2 n} \approx \frac{n b^2}{4c^2}.
\end{align*}
Consequently, $\eta^2$ increases nearly linearly with the size $n$ of the clique. If we increase $n$ four times, we have to approximately halve the coupling strength $b$ to obtain the same trapping effect. This is depicted in Figure \ref{fig:constant_trapping}.

The second case is $d_B \approx \sqrt{n}$ for which the trapping is asymptotically independent of the graph size $n$. While such graphs do not arise among the usual graph families, they can be constructed easily: For each $n$ take as $d_B$ the integer closest to $\sqrt{n}$ making $d_B n$ even. The last requirement is an intimate relationship between the number of vertices and the degree of a graph and we refer to the handshake Lemma \ref{lem:handshake}. If the choice of $d_B$ is still ambiguous (as for $n = 9$), take the larger integer. Subsequently, construct the $d_B$-regular graph with $n$ nodes as in Remark \ref{rem:handshake}. For such graphs of intermediate density we obtain a size-independent trapping
\begin{align}
\eta^2 & \approx \frac{b^2}{4 c^2}
\end{align}
and sketch a number of return probabilities in Figure \ref{fig:intermediate_graphs}.

The last case is $d_B \ll \sqrt{n}$ for which the trapping probability goes to zero with diverging size $n$. Such behavior is exhibited e.g. in cycle graphs, because each vertex has degree $d_B = 2$ and consequently
\begin{align}
\eta^2 & = \frac{b^2}{c^2 n}.
\end{align}
Therefore, $\eta^2$ is inversely proportional to the size of the cycle. Another example are hypercubes, for which the degree of each vertex is equal to the dimension $d$ of the hypercube and the number of nodes is $2^d$. We obtain
\begin{align}
\eta^2 & = \frac{d^2 b^2}{4 c^2 2^d},
\end{align}
which decreases exponentially with the dimension $d$ of the hypercube.

Finally, we note that the oscillation frequency necessarily diverges with increasing $n$ in all of the three cases discussed above. Of course, for a fixed graph, one can keep the frequency constant and vary the minimal trapping probability by adjusting the coupling strengths $b$, and $c$. This behaviour is illustrated in Figure \ref{fig:constant_frequency}

\begin{figure}
    \centering
    \begin{tikzpicture}
\begin{axis}[
    width=.5\linewidth,
    axis lines=left,
    xlabel = {Time},
    ylabel = {Probability},
    xtick = {.0691, 0.2856, 0.4778, .4967, .9934},
    xticklabels = {$0.07$, $0.29$, $0.48$, , $0.99$},
    ytick = {0.0, 0.0749, 0.6694, 0.9806, 1.},
    yticklabels = {$0.0$, $0.07$, $0.67$, $0.98$, },
    grid = both,
    grid style={gray!30},
    xmin = 0, xmax = 1.05,
    ymin = 0, ymax = 1.05,
    samples = 1000,
    trig format plots = rad,
    declare function = {p(\p,\w) = 1 - \p * (1 - cos(\w * x))/2;}
]
    \addplot[color = blue!80!black, domain=0:1]{p(0.9998, 6.3883)};
    \label{plt:b_0.01}
    \addplot[color = cyan!60!blue, domain=0:1]{p(0.9251, 6.5757)};
    \label{plt:b_0.20}
    \addplot[color = orange!80!red, domain=0:1]{p(0.3306, 11.0)};
    \label{plt:b_1.00}
    \addplot[color = red!80!black, domain=0:1]{p(0.01937, 45.4423};
    \label{plt:b_5.00}
\end{axis}
    \node[anchor = north west] at (rel axis cs:1.02, .731) {
    \begin{tabular}{cc|cccc}
        & $b$ & $\eta^2$ & $P_\text{trap}$ & $\omega$ & $\pi / \omega$ \\
        \hline
        \ref{plt:b_0.01} & $0.01$ & $0.0002$ & $0.0002$ & $\phantom{0}6.3252$ & $0.4967$ \\
        \ref{plt:b_0.20} & $0.20$ & $0.0801$ & $0.0749$ & $\phantom{0}6.5757$ & $0.4778$ \\
        \ref{plt:b_1.00} & $1.00$ & $2.0250$ & $0.6694$ & $11.0000$ & $0.2856$ \\
        \ref{plt:b_5.00} & $5.00$ & $50.6250$ & $0.9806$ & $45.4423$ & $0.0691$ \\
     \end{tabular}
    };
\end{tikzpicture}
    \caption{Return Probabilities $P_{1 \to 1}(t)$ in the case of $A$ being a single-node graph ($m = 1$) and $B$ a clique of size $n = 10$. The inter-subgraph coupling strength is kept constant at $c = 1$, while $b$, the coupling strength within $B$, is varied.}
    \label{fig:coupling_strength}
\end{figure}
\begin{figure}
    \centering
    \begin{tikzpicture}
\begin{axis}[
    width=.5\linewidth,
    axis lines=left,
    xlabel = {Time},
    ylabel = {Probability},
    xtick = {0.0242, 0.0483, 0.0967, 0.1934, 0.3867},
    xticklabels = {, $0.05$, $0.10$, $0.19$, $0.39$},
    ytick = {0.96, 0.9697, 1.},
    grid = both,
    grid style={gray!30},
    xmin = 0, xmax = 0.405,
    ymin = .96, ymax = 1.0025,
    samples = 1000,
    trig format plots = rad,
    declare function = {p(\p,\w) = 1 - \p * (1 - cos(\w * x))/2;}
]
    \addplot[color = blue!80!black, domain=0:0.3867]{p(0.0303, 16.2481)};
    \label{plt:n_2}
    \addplot[color = cyan!60!blue, domain=0:0.3867]{p(0.0303, 32.4962)};
    \label{plt:n_8}
    \addplot[color = orange!80!red, domain=0:0.3867]{p(0.0303, 64.9923)};
    \label{plt:n_32}
    \addplot[color = red!80!black, domain=0:0.3867]{p(0.0303, 129.9846)};
    \label{plt:n_128}
\end{axis}
    \node[anchor = north west] at (rel axis cs:1.02, .98) {
    \begin{tabular}{ccc|cc}
         & $n$ & $b$ & $\omega$ & $\pi / \omega$ \\
         \hline
         \ref{plt:n_2} & $\phantom{00}2$ & $16.0000$ & $16.2481$ & $0.1934$ \\
         \ref{plt:n_8} & $\phantom{00}8$ & $\phantom{0}4.5714$ & $32.4962$ & $0.0967$ \\
         \ref{plt:n_32} & $\phantom{0}32$ & $\phantom{0}2.0645$ & $64.9923$ & $0.0483$ \\
         \ref{plt:n_128} & $128$ & $\phantom{0}1.0079$ & $129.9846$ & $0.0242$
    \end{tabular}
    };
\end{tikzpicture}
    \caption{Return Probabilities $P_{1 \to 1}(t)$ in the case of $A$ being a single-node graph ($m = 1$) and $B$ a clique. The inter-subgraph coupling strength $c = 1$ and the probability fraction  $\eta^2 = 32$ are kept constant, while the coupling strength $b$ and the size of the clique $n$ are varied.}
    \label{fig:constant_trapping}
\end{figure}
\begin{figure}
    \centering
    \begin{tikzpicture}
\begin{axis}[
    width=.5\linewidth,
    axis lines=left,
    xlabel = {Time},
    ylabel = {Probability},
    xtick = {0.16, 0.24, 0.47},
    ytick = {0.36, 0.47, 0.53, 0.64, 1.},
    grid = both,
    grid style={gray!30},
    xmin = 0, xmax = .48,
    ymin = .3, ymax = 1.025,
    samples = 1000,
    trig format plots = rad,
    declare function = {p(\p,\w) = 1 - \p * (1 - cos(\w * x))/2;}
]
    \addplot[color = cyan!60!blue, domain=0:.47]{p(0.6364, 13.2665};
    \label{plt:n_7}
    \addplot[color = orange!80!red, domain=0:.47]{p(0.4706, 16.4942)};
    \label{plt:n_8}
    \addplot[color = red!80!black, domain=0:.47]{p(0.3600, 20.0000)};
    \label{plt:n_9}
    \addplot[color = blue!80!black, domain=0:.47]{p(0.5263, 17.4356)};
    \label{plt:n_10}
    
\end{axis}
    \node[anchor = north west] at (rel axis cs:1.02, .98) {
    \begin{tabular}{ccc|ccc}
         & $n$ & $d_B$ & $P_\text{trap}$ & $\omega$ & $\pi / \omega$ \\
         \hline
         \ref{plt:n_7} & $\phantom{0}7$ & $2$ & $0.3636$ & $13.2665$ & $0.2368$ \\
         \ref{plt:n_8} & $\phantom{0}8$ & $3$ & $0.5294$ & $16.4924$ & $0.1905$ \\
         \ref{plt:n_9} & $\phantom{0}9$ & $4$ & $0.6400$ & $20.0000$ & $0.1571$ \\
         \ref{plt:n_10} & $10$ & $3$ & $0.4737$ & $17.4356$ & $0.1802$
    \end{tabular}
    };
\end{tikzpicture}
    \caption{Return Probabilities $P_{1 \to 1}(t)$ in the case of $A$ being a single-node graph ($m = 1$) and $B$ a graph of intermediate density with $d_B = \left \lceil{\sqrt{n}}\right \rceil \approx \sqrt{n}$. Coupling strengths $b = 4$, $c = 2$ are kept constant, while the size $n$ is varied.}
    \label{fig:intermediate_graphs}
\end{figure}
\begin{figure}
    \centering
    \begin{tikzpicture}
\begin{axis}[
    width=.5\linewidth,
    axis lines=left,
    xlabel = {Time},
    ylabel = {Probability},
    xtick = {0.0, 0.5, 1.0},
    ytick = {0.25, .5, 0.75, 1.},
    grid = both,
    grid style={gray!30},
    xmin = 0, xmax = 1.05,
    ymin = .0, ymax = 1.05,
    samples = 1000,
    trig format plots = rad,
    declare function = {p(\p,\w) = 1 - \p * (1 - cos(\w * x))/2;}
]
    \addplot[color = red!80!black, domain=0:1.]{p(1.00, 6.2832};
    \label{plt:p_00}
    \addplot[color = orange!80!red, domain=0:1.]{p(0.75, 6.2832)};
    \label{plt:p_25}
    \addplot[color = cyan!60!blue, domain=0:1.]{p(0.50, 6.2832)};
    \label{plt:p_50}
    \addplot[color = blue!80!black, domain=0:1.]{p(0.25, 6.2832)};
    \label{plt:p_75}
    
\end{axis}
    \node[anchor = north west] at (rel axis cs:1.02, .98) {
    \begin{tabular}{ccc|c}
         & $b$ & $c$ & $P_\text{trap}$  \\
         \hline
         \ref{plt:p_00} & $0.0000$ & $1.5708$ & $0.00$ \\
         \ref{plt:p_25} & $1.0472$ & $1.3603$ & $0.25$ \\
         \ref{plt:p_50} & $1.4810$ & $1.1107$ & $0.50$ \\
         \ref{plt:p_75} & $1.8138$ & $0.7854$ & $0.75$
    \end{tabular}
    };
\end{tikzpicture}
    \caption{Return Probabilities $P_{1 \to 1}(t)$ in the case of $A$ being a single-node graph ($m = 1$) and $B$ a clique of size $n = 4$. The angular frequency $\omega = 2 \pi$ is kept constant, while we vary the coupling strengths $b$ and $c$.}
    \label{fig:constant_frequency}
\end{figure}

\subsection{A Clique Coupled to an Arbitrary Subgraph}
\label{subsec:clique}
Let us now consider the more demanding situation of Subgraph $A$ being a clique and $B$ an arbitrary regular graph. The subgraph’s weighted adjacency matrix
\begin{align}
\label{eqn:clique_hamiltonian}
G_A & =
\begin{pmatrix}
1 & a & \cdots & a \\
a & 1 & \cdots & a \\
\vdots & \vdots & \ddots & \vdots \\
a & a & \cdots & 1
\end{pmatrix}
\end{align}
has two eigenvalues: The first one is the already-mentioned $\lambda_A = (m - 1) a + 1$ which, together with $\ket{\phi} = \frac{1}{\sqrt{m}} \uv_m$, forms the uniform eigenpair. The second eigenvalue $1 - a$ is degenerate for $m \geq 3$. The corresponding orthonormal eigenvectors are
\begin{align}
\label{eqn:clique_eigenvectors}
\ket{\phi_i} & =
\frac{1}{\sqrt{i(i + 1)}}
(\underbrace{1, \ldots, 1}_i, -i, 0, \ldots, 0),
\end{align}
where $i \in \coll{1, \ldots, m - 1}$. Due to the degeneracy of the second eigenvalue, the general Schrödinger propagator from Equation \eqref{eqn:schroedinger_propagator} simplifies significantly to
\begin{align}
\label{eqn:clique_schroedinger}
U(t) & = e^{-\iu t (1 - a)} \left(M_A - \ket{\psi_A} \bra{\psi_A} \right) +
\left(e^{-\iu t \lambda_+} \ket{\psi_+} \bra{\psi_+}\right) +
\left(e^{-\iu t \lambda_-} \ket{\psi_-} \bra{\psi_-}\right) +
\left(\sum_{k = m + 2}^{m + n} e^{-it \lambda_k} \ket{\psi_k} \bra{\psi_k}\right).
\end{align}
In this section we are interested in the nodal transfer probability, i.e. the probability of measuring the excitation in the $j$-th node, assuming that it was initially at the $i$-th node. Due to the cyclic permutation property of the trace, we have
\begin{align}
\label{eqn:nodal_transfer_probabilites}
P_{i \to j}(t) & =
\Tr \left[\ket{j}\bra{j} U(t) \ket{i} \bra{i} U(t)^\dagger \right] =
\abs{\braket{j | U(t) | i}}^2.
\end{align}
We obtain the nodal transfer probabilities
\begin{align}
\label{eqn:nodal_return_probability}
P_{i \to j}(t) & = \abs{
e^{-\iu t (1 - a)} \left(\delta_{ij} - \frac{1}{m} \right) +
\frac{1}{m} C^2 e^{-\iu t \lambda_+} + \frac{1}{m} S^2 e^{-\iu t \lambda_-}
}^2,
\end{align}
$i, j \in A$, by inserting \eqref{eqn:clique_schroedinger} into \eqref{eqn:nodal_transfer_probabilites} and using the scalar products from Equation \eqref{eqn:scalar_products}.

We begin our comments with the notes on complete depletion which occurs iff 
$\inf_{t} P_{i \to i}(t) = 0$. In contrast to full oscillations, which occur at subgraph level, complete depletion is a node-level effect. In full oscillations an excitation starting in Subgraph $A$ leaves it for Subgraph $B$. In complete depletion, however, an excitation starting in node $i$ will leave it, but its wave function, in general, will split into a part still going to other nodes of $A$, and a part travelling to Subgraph $B$. Pulling out a common phase in Equation \eqref{eqn:nodal_return_probability} and employing the triangle inequality we obtain
\begin{align}
\label{eqn:complete_depletion}
P_{1 \to 1}(t) & = \abs{\left(1 - \frac{1}{m} \right) +
\frac{1}{m} C^2 e^{-\iu t (\lambda_+ - (1 - a))} +
\frac{1}{m} S^2 e^{-\iu t (\lambda_- - (1 - a))}} \geq 1 - \frac{2}{m},
\end{align}
which shows that complete depletion cannot occur for $m \geq 3$. With increasing size of the clique $A$, the excitation is even trapped in its original position uniformly in all other graph parameters.

The case of a single-node subgraph has been discussed in Section \ref{subsec:single-node}: for $m = 1$ full oscillations and complete depletion coincide. We therefore focus on the case $m = 2$. To that end let us define $\omega_\pm \coloneqq \lambda_\pm - (1 - a)$ and recall some basic facts of quasi-periodic time evolution: For one, if $\frac{\omega_+}{\omega_-} = \frac{2l + 1}{2k + 1}$ for some $k, l \in \Intgrs$ (odd-over-odd rational), we observe complete depletion in finite time. More precisely, for $t_\text{d} = \pi \frac{2k + 1}{\omega_-}$ one has $P_{i \to i}(t_\text{d}) = 0$ and this situation repeats with period $2 \pi \frac{2k + 1}{\omega_-}$. For the other, if $\frac{\omega_+}{\omega_-} \in \Reals \setminus \Rationals$ (irrational fraction) the trajectory $t \mapsto t (\omega_+, \omega_-)$ is dense in the two-dimensional torus $\mathbb{T}^2$ and we again observe complete depletion but the infimum is never attained. In all other cases (even-over-odd or odd-over-even rational), $P_{i \to i}(t)$ is bounded away from zero, i.e. complete depletion does not occur.

At a time $t_\text{d}$ at which a (nearly) complete depletion occurs, both complex exponentials in Equation \eqref{eqn:complete_depletion}, $e^{-\iu t_\text{d} \omega_{\pm}}$, are close to $-1$. Hence, by Equation \eqref{eqn:nodal_return_probability}, we have
\begin{align}
\label{eqn:nodal_transfer_probability}
P_{1 \to 2}(t_\text{d}) & = 
\abs{\frac{1}{2} C^2 e^{-\iu t_\text{d} \omega_+} +
\frac{1}{2} S^2 e^{-\iu t_\text{d} \omega_-} - \frac{1}{2}}^2 \approx 1,
\end{align}
i.e. the excitation moves (almost) fully to the second node and not into Subgraph $B$.

Finally, let us study an asymptotic behavior for the case $m = 2$. We fix all coupling strengths, the family of Subgraph $B$, i.e. $d_B = d_B(n)$, and let it grow in size $n \to \infty$. Moreover, we assume that the graph density approaches the limit
\begin{align}
\label{eqn:density}
\alpha & \coloneqq \lim_{n \to \infty} \frac{d_B(n)}{n}.
\end{align}
Then, asymptotically complete trapping implies asymptotically complete depletion, unless the balance condition
\begin{align}
\label{eqn:balance}
\alpha & = \frac{c^2}{a b} 
\end{align}
is satisfied. To see this, let us recall from Section \ref{subsec:trapping_subgraphs} that asymptotically complete trapping is equivalent to $\eta \to -\infty$, because Subgraph $A$ has the fixed size $m = 2$. In the case of asymptotically complete trapping, Equations \eqref{eqn:perturbed_eigenvectors} and \eqref{eqn:scalar_products} imply that $C^2 \approx 0$ and thus the corresponding term is negligible in Equation \eqref{eqn:complete_depletion}. Therefore, we indeed obtain asymptotically complete depletetion, unless the remaining frequency $\omega_- = \lambda_- - (1 - a)$ vanishes. In fact, expanding definitions and gathering terms yields
\begin{align*}
\omega_- & =
\frac{a}{2} \left( 3 + \sqrt{1 + \frac{1}{\eta^2}} \right) +
\frac{b}{2} d_B \left(1 - \sqrt{1 + \frac{1}{\eta^2}} \right),
\end{align*}
which, after employing the difference of squares formula, becomes
\begin{align*}
\frac{a}{2} \left( 3 + \sqrt{1 + \frac{1}{\eta^2}} \right) -
\frac{4 c^2}{b} \frac{n}{d_B} \frac{1}{\left(1 - \frac{a}{b d_B}\right)^2} \frac{1}{1 + \sqrt{1 + \frac{1}{\eta^2}}} & \xrightarrow{n \to \infty}
2 \left(a - \frac{c^2}{b \alpha} \right)
\end{align*}
and explains Condition \eqref{eqn:balance}. Let us conclude by remarking that the above generalizes the missing-link case from \cite{Buridan_paper} to the situation where the link is not fully missing ($a \neq 0$) and is not necessarily a part of a fully connected graph.

\section{Balanced Mode Coupling and Further Generalization}
\label{sec:generalization}
Our choice of the model’s structure was as to generalize significantly the previously considered examples while still maintaining a sufficiently high degree of symmetry to perform explicit calculations. More precisely, we have chosen the block structure of the weighted adjacency matrix \eqref{eqn:adjacency_matrix} to generalize from “mind states” (nodes) to “attitudes” (blocks) in the Buridan’s ass interpretation. We started with the simplest non-trivial case of two blocks. The uniform coupling of all nodes in Subgraph $A$ to all nodes in Subgraph $B$ with coupling strength $c$ allowed us to treat earlier examples as special cases of the current model. The assumption of regularity of the subgraphs had two important implications. For one, it provided us with the explicit knowledge of an eigenvector-eigenvalue pair for each subgraph (uniform eigenpair). For the other, it was compatible with the uniform coupling structure and gave rise to the invariant Hilbert spaces $\mathcal{H}_{A'}$, $\mathcal{H}_{B'}$. This resulted in the two-level dynamics on the level of subgraphs presented in Section \ref{sec:subgraph_effects} and \ref{subsec:single-node}.

In fact, we can easily replace the assumption of the regularity of the subgraphs by its two implications, if we maintain a compatible coupling structure. To that end, let us consider the Hamiltonian
\begin{align}
\label{eqn:generalized_hamiltonian}
H & =
H_A + H_B +
\tilde{c}
\left(\ket{\psi_A} \bra{\psi_B} + \ket{\psi_B} \bra{\psi_A} \right)
\end{align}
corresponding to the block structure from Equation \eqref{eqn:adjacency_matrix}. Above, $H_A$ ($H_B$) is the Hamiltonian corresponding to the matrix $G_A$ ($G_B$, respectively). Instead of the uniform eigenpair for $A$ ($B$) we now consider an arbitrary normalized eigenvector $\ket{\psi_A}$ ($\ket{\psi_B}$) with corresponding eigenvalue $\lambda_A$ ($\lambda_B$, respectively). In the following we will transfer our findings from Section \ref{sec:subgraph_effects} to this more general setting.

To have a concrete example in mind, consider $A$ ($B$) being a clique with the intra-graph coupling strength $a$ ($b$, respectively) and
\begin{equation}
\label{eqn:generalized_coupling}
\begin{aligned}
\lambda_A & = 1 - a, & \lambda_B & = 1 - b, \\
\ket{\psi_A} & = \frac{1}{\sqrt{2}} \left( \ket{1} - \ket{2} \right), &
\ket{\psi_B} & = \frac{1}{\sqrt{2}} \left( \ket{m + 1} - \ket{m + 2} \right).
\end{aligned}
\end{equation}
Let us note that in the site basis $(\ket{i})_i$ this particular example results in a rather sparse coupling matrix $\ket{\psi_A} \bra{\psi_B}$ with only four non-zero entries, whereas our initial coupling matrices were dense. We also point out that the high degeneracy of the eigenvalues in the clique setting allows for considerable freedom in the choice of the coupling eigenvectors: We can choose any normalized $\ket{\psi_A}$ ($\ket{\psi_B}$) orthogonal to the uniform state in $A$ ($B$, respectively), i.e. any normalized vector whose entries sum up to zero in the site basis. We shall refer to this case as the balanced mode coupling.

We shall not assume any particular choice of the coupling states $\ket{\psi_A}$ and $\ket{\psi_B}$, unless stated explicitly. We keep the assumption of $A$ and $B$ having disjoint supports and complement $\ket{\psi_A}$ with the eigenvectors $\ket{\psi_1}, \ldots, \ket{\psi_{m - 1}}$ of $H_A$ to an orthonormal basis of $\mathcal{H}_A$. The corresponding procedure for Subgraph $B$ yields $\ket{\psi_{m + 2}}, \ldots, \ket{\psi_{m + n}}$.  The dependence of the true degrees of freedom $\eta$ and $\omega$ on the graph parameters needs to be replaced with the dependence on the generalized graph parameters $\ket{\psi_A}$, $\ket{\psi_B}$, and $\tilde{c}$. For the balanced mode coupling, as in Display \eqref{eqn:generalized_coupling}, for example, we obtain
\begin{align}
\label{eqn:generalized_dependence}
\eta & = \frac{b - a}{2 \tilde{c}}, & \omega & = \sqrt{4 \tilde{c}^2 - (b - a)^2}.
\end{align}
As in the uniform coupling case, the original eigenvectors $\ket{\psi_A}$ and $\ket{\psi_B}$ of the subgraphs’ Hamiltonians are rotated by the angle $\varphi = \varphi(\eta)$. The resulting eigenvectors $\ket{\psi_+}$ and $\ket{\psi_-}$, on the one hand, are the only means of propagation between the subgraphs, and the strength of this propagation path depends on the rotation angle $\varphi$. The original eigenvectors $\ket{\psi_1}, \ldots, \ket{\psi_{m - 1}}$, however, remain unchanged by the coupling and their span, $\mathcal{H}_{A'}$, is invariant under the time evolution. The same is true for the original eigenvectors $\ket{\psi_{m + 1}}, \ldots, \ket{\psi_{m + n}}$ and their span, $\mathcal{H}_{B'}$. For an excitation inserted into Subgraph $A$ it means that its part supported in $\mathcal{H}_{A'}$ will stay there forever. Its other part, aligned with $\ket{\psi_A}$, is further subdivided into two contributions. The first contribution will stay in state $\ket{\psi_A}$, while the second contribution will oscillate fully between $\ket{\psi_A}$ and $\ket{\psi_B}$. This is best summarized by rewriting the return probability from Equation \eqref{eqn:return_probability_concrete} as
\begin{align}
\label{eqn:return_probability_three_contributions}
P_{A \to A}(\rho, t) & =
\Tr[M_{A'} \rho] +
P_\text{trap} \braket{\psi_A | \rho | \psi_A} +
P_\text{trans} \frac{1 + \cos(\omega t)}{2} \braket{\psi_A | \rho | \psi_A},
\end{align}
where we used $1 = \Tr[\rho] = \Tr[M_A \rho] = \Tr[(M_{A'} + \ket{\psi_A}\bra{\psi_A}) \rho]$. The first term corresponds to the first part of the excitation, while the second and third terms correspond to the two contributions of the second part. The maximal transition probability $P_\text{trans}$ and the minimal trapping probability $P_\text{trap} = 1 - P_\text{trans}$, i.e. the splitting of the second part into the two contributions, are complementary and depend only on the generalized graph parameters. More precisely, they are a function of the single parameter $\eta^2 = \frac{P_\text{trap}}{P_\text{trans}}$, which is invariant under swapping of the subgraphs. The angular frequency of the oscillations $\omega$ also depends only on the generalized graph parameters and is also invariant under swapping of the subgraphs. However, $P_\text{trans}$ (or, equivalently, $\eta$) alone does not fix $\omega$, so they can be set independently. This fact is illustrated in Figure \ref{fig:minimal_return_probability} depicting the minimal return probability $P_{A \to A}(\ket{\psi_A} \bra{\psi_A}, t)$ for different choices of the geometric parameter $\eta$ and angular frequency $\omega$.
\begin{figure}
    \centering
    \begin{tikzpicture}
\begin{axis}[
    width=.5\linewidth,
    axis lines=left,
    xlabel = {Time},
    ylabel = {Probability},
    xtick = {0., .25, .5, .75, 1.},
    ytick = {0., .5, 1.},
    grid = both,
    grid style={gray!30},
    xmin = 0, xmax = 1.05,
    ymin = 0, ymax = 1.05,
    samples = 1000,
    trig format plots = rad,
    declare function = {p(\p,\w) = 1 - \p * (1 - cos(\w * x))/2;}
]
    \addplot[color = blue!80!black, domain=0:1]{p(.0, 2*pi)};
    \label{plt:frac_infty_omega_0}
    \addplot[color = cyan!60!blue, domain=0:1]{p(.5, 2*pi)};
    \label{plt:frac_1_omega_2_pi}
    \addplot[color = orange!80!red, domain=0:1]{p(1., 2*pi)};
    \label{plt:frac_0_omega_2_pi}
    \addplot[color = red!80!black, domain=0:1]{p(1., 4*pi};
    \label{plt:frac_0_omega_4_pi}
\end{axis}
    \node[anchor = north west] at (rel axis cs:1.02, .5579) {
    \begin{tabular}{c|cc}
        & $\eta$ & $\omega$ \\
        \hline
        \ref{plt:frac_infty_omega_0} & $\pm \infty$ & --- \\
        \ref{plt:frac_1_omega_2_pi} & $1$ & $2 \pi$ \\
        \ref{plt:frac_0_omega_2_pi} & $0$ & $2 \pi$ \\
        \ref{plt:frac_0_omega_4_pi} & $0$ & $4 \pi$ \\
     \end{tabular}
    };
\end{tikzpicture}
    \caption{Minimal Return Probability $P_{A \to A}\left(\ket{\psi_A} \bra{\psi_A}, t\right)$ for different values of the geometric parameter $\eta$ and oscillation frequency $\omega$. Higher $\abs{\eta}$ correspond to higher minimal trapping. For $\abs{\eta} = \infty$ complete trapping occurs (violet curve). For $\eta = 0$ there are full oscillations. Moreover, the oscillation frequency is independent of the minimal trapping probability: Red curve’s frequency is twice that of the orange curve, yet both exhibit full oscillations.}
    \label{fig:minimal_return_probability}
\end{figure}

Full oscillations between subgraphs occur exactly when the initial value $\rho$ is $\ket{\psi_A} \bra{\psi_A}$ and $P_\text{trans} = 1$. In the balanced mode case, as in Display \eqref{eqn:generalized_coupling}, this occurs exactly for $a = b$. Let us emphasize that in contrast to the uniform coupling case ($(m - 1) a \overset{!}{=} (n - 1) b$) this condition is independent of the sizes of the subgraphs. Minimal trapping in the subgraph occurs also if and only if $\rho = \ket{\psi_A} \bra{\psi_A}$ and the minimal value is $P_\text{trap}$ in this case. For the balanced mode coupling, minimal trapping can be complete only asymptotically, either for large eigenvalue difference $b - a$, or for small inter-graph coupling $\tilde{c}$.

Finally, if we average over initial states, we see that the excitation becomes fully trapped with diverging $m$ universally in the other generalized graph parameters. While this fact might seem surprising at first glance it becomes clear from the discussion above --- from $m$ orthogonal basis functions of $\mathcal{H}_A$ only one leads to propagation to Subgraph $B$, so its contribution vanishes with growing $m$. 

\section{Conclusions}
\label{sec:conclusions}
We proposed a model generalizing the previously studied cases of a star graph, a fully connected graph, and a fully connected graph with one edge missing. In our detailed study of two fully interconnected regular graphs we focused on the possibility of transferring an excitation from one subgraph to the other. We observed effects at the subgraph- and at the node-level.

Full oscillations and complete trapping arise at subgraph level and are due to the underlying two-level dynamics. Full oscillations require a careful but simple balance between the subgraphs’ degrees and the coupling strengths. Additionally, the excitation needs to be initially in the uniform state of one of the subgraphs. Asymptotically complete minimal trapping requires an interplay of two independent mechanisms: The size of the subgraphs and their local structure, i.e. their degree and coupling strengths. More precisely, if one of the subgraphs is dense, then the inhibition of transfer will be the more pronounced the bigger this dense component becomes. If we average over all initial states localized in one of the subgraphs, we observe asymptotic average complete trapping: The probability of finding the excitation in the initial subgraph approaches one uniformly in size and structure of the other subgraph.

We might look at the problem studied as an example of opinion spreading between two groups represented by two graphs. Though they are highly coupled the transfer of opinions happens only in certain regimes and sometimes even the size is not the main control parameter for the transfer. The fundamental effect behind this fact is interference and underlines once again that unexpected effects can occur as its consequence. It is also worth returning to the classical Buridan's ass paradox that motivated this work \cite{Leibniz, Aristotle, AI}. There, symmetry between two equally attractive options is what traps the decision-maker in indecision. Our results show the opposite: the symmetry condition $d_A a = d_B b$ from Equation \eqref{eqn:full_oscillations_condition} is what produces full oscillation between the opinion bubbles, while it is asymmetry --- one bubble growing large and dense relative to the other --- that produces trapping. Due to quantum interference, symmetry may enable transitions rather than freezing them.

On the node-level we discussed, for one, the implications of subgraph-level effects when specialized to single-node subgraphs. In particular, we observed a critical subgraph density required for asymptotically complete trapping. For the other, we considered coupling a clique to an arbitrary subgraph. In this situation we could study complete depletion. We proved that complete depletion can only occur if the clique is trivial, i.e. consists of at most two nodes. The two-node subgraph case exhibits the more involved three-level dynamics. In particular, asymptotically complete subgraph trapping implies asymptotically complete depletion, unless the graph density is carefully balanced against the coupling strengths.

Finally, we further generalized our model to coupling of arbitrary two eigenvalue-eigenvector pairs from each subgraph. This allowed us to study the coupling of balanced modes of two cliques identifying exact conditions for full oscillations, asymptotically complete trapping in the subgraphs and observing mean asymptotic trapping.

We find it worth to consider further generalizations of our model. It is intriguing to consider more than two simple coupled graphs. Our previous experience suggests that a certain type of reduction might be possible hence certain typical graph properties control the dynamics of the networks despite the fact that they appear rather complex and complicated at first glance. This might shed light onto quantum analogs of population dynamics as well as the functioning of certain mental processes, assuming a wave-quantum mechanism is relevant. 

\section*{Acknowledgments}
I. J. and J. N. have been supported by the Czech Science Foundation (GA\v CR) project number 26-23973S. 
S. M. B. thanks the Royal Society for the award of a Research Professorship (RP150122 and RSRP/R/210005). T. K. was supported by the Ministry of Culture and Innovation and the National Research, Development and Innovation Office of Hungary (Grant No. TKP-2021-NVA-04). L. C. gratefully acknowledges the support by EU and Czech Ministry of Education under project \verb|CZ.02.01.01/00/22_011/0008569| “Czech Technical University --- International Postdoctoral Programme”.

\appendix
\section{Some Lemmas}
\begin{lemma}
\label{lem:convex_max}
Let $\mathcal{H}$ be a Hilbert space with $\dim(\mathcal{H}) \eqqcolon n < \infty$ and $\psi \in \mathcal{H}$ with $\braket{\psi | \psi} = 1$. Furthermore, let $\mathcal{D}_n$ denote the space of density matrices over $\mathcal{H}$. Then the convex maximization problem
\begin{align}
\label{eqn:convex_max}
\max_{\rho \in \mathcal{D}_n} \Tr[\ket{\psi} \bra{\psi} \rho]
\end{align}
has the unique solution $\rho_0 = \ket{\psi} \bra{\psi}$ and the maximal value is $1$.
\end{lemma}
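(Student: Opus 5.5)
The plan is to diagonalize $\rho$ and reduce the claim to an elementary statement about convex weights. Any $\rho \in \mathcal{D}_n$ is positive semidefinite with unit trace, so by the spectral theorem
\begin{align*}
\rho = \sum_{k=1}^n p_k \ket{e_k}\bra{e_k},
\end{align*}
where $(\ket{e_k})_k$ is an orthonormal basis, $p_k \geq 0$ and $\sum_k p_k = 1$. Then
\begin{align*}
\Tr[\ket{\psi}\bra{\psi}\rho] = \braket{\psi|\rho|\psi} = \sum_k p_k \abs{\braket{e_k|\psi}}^2 .
\end{align*}
By the Cauchy--Schwarz inequality each $\abs{\braket{e_k|\psi}}^2 \leq 1$, hence the objective is at most $\sum_k p_k = 1$. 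The value $1$ is attained at $\rho_0 = \ket{\psi}\bra{\psi}$, since $\braket{\psi|\psi}^2 = 1$. This settles the existence of the maximum and its value.

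For uniqueness, suppose $\braket{\psi|\rho|\psi} = 1$. Since $\sum_k p_k = 1$ and each term satisfies $p_k\abs{\braket{e_k|\psi}}^2 \leq p_k$, equality forces $p_k\abs{\braket{e_k|\psi}}^2 = p_k$ for every $k$. So whenever $p_k > 0$ we get $\abs{\braket{e_k|\psi}} = 1$. By the equality case of Cauchy--Schwarz, together with $\N{e_k} = \N{\psi} = 1$, this gives $\ket{e_k} = e^{\iu\theta_k}\ket{\psi}$, and then $\ket{e_k}\bra{e_k} = \ket{\psi}\bra{\psi}$.

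Orthonormality now allows at most one index $k$ with $p_k > 0$, since two distinct basis vectors cannot both be proportional to $\ket{\psi}$. That index must carry $p_k = 1$, and so $\rho = \ket{\psi}\bra{\psi}$.

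An alternative route avoids the eigenbasis. Write $1 - \braket{\psi|\rho|\psi} = \Tr[(\ident - \ket{\psi}\bra{\psi})\rho]$, which is the trace of the product of two positive semidefinite operators. If it vanishes, then $(\ident - \ket{\psi}\bra{\psi})\rho^{1/2} = 0$. Hence $\rho = \ket{\psi}\bra{\psi}\rho\ket{\psi}\bra{\psi} = \braket{\psi|\rho|\psi}\,\ket{\psi}\bra{\psi} = \ket{\psi}\bra{\psi}$.

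The only delicate step is the equality case in the uniqueness argument: one has to be careful that eigenvectors with zero weight impose no constraint. Both arguments above handle this explicitly.
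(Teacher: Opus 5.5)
Your proposal is correct, and its main argument is essentially the paper's proof: diagonalize $\rho$, bound each $\abs{\braket{e_k|\psi}}^2$ by $1$ via Cauchy--Schwarz, and use the equality case to force every eigenvector of nonzero weight to be a phase multiple of $\ket{\psi}$, so that orthonormality leaves a single weight equal to $1$. Your eigenbasis-free alternative is also sound, because $\Tr[(\ident - \ket{\psi}\bra{\psi})\rho] = \Tr[\rho^{1/2}(\ident - \ket{\psi}\bra{\psi})^2\rho^{1/2}]$ is the squared Hilbert--Schmidt norm of $(\ident - \ket{\psi}\bra{\psi})\rho^{1/2}$, which justifies the step you stated without comment.
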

\begin{proof}
Consider an arbitrary $\rho \in \mathcal{D}_n$. By the spectral theorem it has a representation
\begin{align}
\label{eqn:spectral_representation}
\rho & = \sum_{i = 1}^n \alpha_i \ket{\psi_i} \bra{\psi_i},
\end{align}
with $\sum_{i = 1}^n \alpha_i = 1$, $\alpha_1, \ldots, \alpha_n \geq 0$, and orthonormal $\ket{\psi_1}, \ldots, \ket{\psi_n}$. Then, by the cyclic permutation property of the trace, and the Cauchy-Schwarz inequality,
\begin{align}
\label{eqn:cauchy-schwarz_argument}
\Tr[\ket{\psi} \bra{\psi} \rho] = \sum_{i = 1}^n \alpha_i \abs{\braket{\psi | \psi_i}}^2 \leq \sum_{i = 1}^n \alpha_i \abs{\braket{\psi | \psi}}^2 \abs{\braket{\psi_i | \psi_i}}^2= 1.
\end{align}
Hence $1$ is an upper bound for Problem \eqref{eqn:convex_max}. Clearly, this upper bound is attained for $\rho = \ket{\psi} \bra{\psi}$ and so $\rho_0$ is a maximizer.

Conversely, if $\rho$ is a maximizer, i.e. $\Tr[\ket{\psi} \bra{\psi} \rho] = 1$, Inequality \eqref{eqn:cauchy-schwarz_argument} implies that for any $i$ for which $\alpha_i \neq 0$ we have
\begin{align*}
\abs{\braket{\psi|\psi_i}} & = \N{\psi} \N{\psi_i}.
\end{align*}
Hence, by the Cauchy-Schwarz Theorem, $\psi_i = e^{\iu \theta_i} \psi$ for some $\theta_i \in [-\pi, \pi)$. Due to the orthonormality of $\ket{\psi_1}, \ldots, \ket{\psi_n}$, there must be exactly one $i$ with $\alpha_i \neq 0$ and, due to the sum condition, $\alpha_i = 1$. Thus, indeed, $\rho = e^{\iu (\theta_i - \theta_i)} \ket{\psi} \bra{\psi} = \ket{\psi} \bra{\psi}$.
\end{proof}

\begin{lemma}
\label{lem:unitary_invariant_mean}
Let $\mathcal{D}_n$ be the space of $n \times n$ density matrices and let $\mu$ be a unitary invariant probability measure on $\mathcal{D}_n$, i.e.
\begin{align*}
\mu(\rho) & = \mu(U \rho U^\dagger)
\end{align*}
for any unitary $U$. Then the average density matrix $\overline{\rho}$ w.r.t. to $\mu$ is given by $\overline{\rho} = \frac{1}{n} \ident_n$.
\end{lemma}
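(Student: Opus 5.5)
The plan is to show that $\overline{\rho}$ commutes with every unitary and then identify it through its trace. First I will make the average precise. Set $\overline{\rho} \coloneqq \int_{\mathcal{D}_n} \rho \, \d\mu(\rho)$, understood entrywise or, equivalently, as the unique matrix with $\Tr[X \overline{\rho}] = \int \Tr[X\rho]\,\d\mu(\rho)$ for all $X$. This integral exists because $\mathcal{D}_n$ is compact and $\rho \mapsto \rho$ is continuous and bounded. Since $\mathcal{D}_n$ is convex and closed, $\overline{\rho}$ is again a density matrix, and by linearity of the trace $\Tr[\overline{\rho}] = \int \Tr[\rho]\,\d\mu = 1$.

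The next step uses the invariance hypothesis, read as invariance of $\mu$ under the push-forward by $\rho \mapsto U\rho U^\dagger$. Fix a unitary $U$ and change variables in the integral:
\begin{align*}
U \overline{\rho} U^\dagger = \int U \rho U^\dagger \, \d\mu(\rho) = \int \rho \, \d\mu(\rho) = \overline{\rho}.
\end{align*}
Hence $\overline{\rho}$ commutes with every unitary $U$.

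To conclude, I will show that a matrix commuting with all unitaries is a multiple of the identity. Take $U$ to be diagonal with distinct phases $\mathrm{diag}(e^{\iu\theta_1},\ldots,e^{\iu\theta_n})$; commuting with it forces $\overline{\rho}$ to be diagonal. Then take $U$ to be the permutation matrices exchanging basis vectors $i$ and $j$; commuting with these forces all diagonal entries to be equal. (Alternatively, diagonalize $\overline{\rho}$ and use that unitaries act transitively on unit vectors, so all its eigenvalues coincide.) Thus $\overline{\rho} = \lambda \ident_n$, and the trace condition gives $\lambda = \frac{1}{n}$.

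The only genuinely delicate point is measure-theoretic rather than algebraic. The statement's notation $\mu(\rho) = \mu(U\rho U^\dagger)$ has to be interpreted as $\mu(E) = \mu(UEU^\dagger)$ for Borel sets $E$, so that the change of variables above is legitimate. I would state this interpretation explicitly and otherwise expect no obstacles.
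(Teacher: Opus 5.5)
Your proposal is correct and follows essentially the paper's proof: invariance plus linearity of the integral gives $U\overline{\rho}U^\dagger = \overline{\rho}$ for every unitary $U$, this forces $\overline{\rho}$ to be a multiple of $\ident_n$, and $\Tr[\overline{\rho}] = 1$ fixes the constant. The only difference is cosmetic: the paper shows scalarity by diagonalizing $\overline{\rho}$ and using the unitary that swaps two eigenvectors (your parenthetical alternative), whereas your main argument uses diagonal phase unitaries and transpositions in a fixed basis; you also spell out the push-forward reading of invariance and why $\Tr[\overline{\rho}] = 1$, which the paper uses without comment.
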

\begin{proof}
By the unitary invariance of $\mu$ and the linearity of the integral we immediately obtain that, for any unitary $U$,
\begin{align*}
\overline{\rho} & = \int \rho \mu(\d \rho) = \int U \rho U^\dagger \mu(\d \rho) =
U \int \rho \mu(\d \rho) U^\dagger = U \overline{\rho} U^\dagger,
\end{align*}
i.e. that $\overline{\rho}$ is unitarily invariant. Let $\lambda_1, \ldots, \lambda_n$ be the eigenvalues of $\overline{\rho}$ and $\ket{\psi_1}, \ldots, \ket{\psi_n}$ be their corresponding orthonormal eigenvectors. Consider any two eigenvalues of $\overline{\rho}$, w.l.o.g. $\lambda_1$ and $\lambda_2$, and define the unitary
\begin{align*}
U & \coloneqq
\ket{\psi_1} \bra{\psi_2} + \ket{\psi_2} \bra{\psi_1} + \sum_{k = 3}^n \ket{\psi_k} \bra{\psi_k}.
\end{align*}
Then, by the above,
\begin{align*}
\lambda_1 \ket{\psi_1} & =
\overline{\rho} \ket{\psi_1} =
U \overline{\rho} U^\dagger \ket{\psi_1} =
U \overline{\rho} \ket{\psi_2} =
\lambda_2 U \ket{\psi_2} =
\lambda_2 \ket{\psi_1}.
\end{align*}
As $\ket{\psi_1} \neq 0$, we must have $\lambda_1 = \lambda_2$. This shows that all eigenvalues of $\overline{\rho}$ are equal to, say, $C$, and hence $\overline{\rho} = C \ident_n$. Finally, due to the trace condition $\Tr[\overline{\rho}] = C n = 1$, we indeed have that $\overline{\rho} = \frac{1}{n} \ident_n$.
\end{proof}

\begin{lemma}[Handshake Lemma]
\label{lem:handshake}
Let $G$ be an irreflexive undirected (symmetric) regular graph of $n$ vertices, each of degree $d$. Then, $d n$ is even. Conversely, if $d \in \mathbb{N}_0$, $n \in \mathbb{N}$, are such that $d < n$ and $d n$ is even, then there is an irreflexive undirected regular graph $G$ with $n$ vertices each of degree $d$.
\end{lemma}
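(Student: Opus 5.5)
The forward direction is a double count. I will count ordered pairs $(v,e)$ with $v$ a vertex of $G$ and $e$ an edge incident to $v$. Summing over vertices gives $\sum_v \deg(v) = dn$. Summing over edges gives $2|E|$, because the graph is irreflexive (no loops) and undirected, so each edge has exactly two distinct endpoints. Hence $dn = 2|E|$ is even.

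For the converse I will use a circulant construction on the vertex set $\Intgrs_n = \coll{0, \ldots, n-1}$. First I define, for each $k$ with $1 \le k \le \lfloor (n-1)/2 \rfloor$, the edge set $E_k \coloneqq \coll{\coll{i, i+k \bmod n} : i \in \Intgrs_n}$. Since $k \not\equiv -k \pmod n$ for these $k$, $E_k$ is a union of cycles (a $2$-factor): every vertex gets exactly the two neighbours $i \pm k$, and there are no loops. Distinct $k$ in this range give disjoint edge sets, since $\coll{i,j}$ determines the difference $j-i$ up to sign.

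Then I split on the parity of $d$. If $d$ is even, write $d = 2r$. From $d < n$ we get $r \le \lfloor (n-1)/2\rfloor$, so $E = E_1 \cup \cdots \cup E_r$ is $2r$-regular. If $d$ is odd, the hypothesis that $dn$ is even forces $n$ to be even. In that case I add the perfect matching $E_{n/2} \coloneqq \coll{\coll{i, i+n/2}}$, which is disjoint from the $E_k$ above and gives each vertex exactly one more neighbour. I then take $E = E_1 \cup \cdots \cup E_r \cup E_{n/2}$ with $d = 2r+1$. This requires $r \le n/2 - 1 = \lfloor (n-1)/2 \rfloor$, which follows from $2r+1 < n$.

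The only delicate step is the bookkeeping that the degrees add up exactly. Each $E_k$ with $k < n/2$ must contribute exactly two distinct neighbours, which needs $2k \not\equiv 0 \pmod n$. The sets must also be pairwise disjoint, which needs $\pm k$ to be distinct residues for different $k$. Once both are checked, the edge cases are straightforward: $d = 0$ gives the empty graph, and $n = 1$ forces $d = 0$. I would record this construction as the one referred to in Remark~\ref{rem:handshake}.
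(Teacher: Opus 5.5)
Your proof is correct and follows the paper's argument. The forward direction is the same degree-sum double count. Your graph on $\mathbb{Z}_n$ with connection set $\{\pm 1, \ldots, \pm r\}$, plus the antipodal matching $i \mapsto i + n/2$ when $d$ is odd, is exactly the paper's circulant graph defined by the sequence $(b_k)$ with $b_k = b_{n-k}$; you merely describe it as a union of edge-disjoint $2$-factors instead.
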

\begin{proof}
The first claim immediately follows from the observation that $d n$ is the sum of the degrees of the individual nodes and is hence twice the number of edges (each edge contributes one to the degree of both nodes it connects).

For the second claim, let us assume that $d, n \in \mathbb{N}$ such that $d < n$ and $d n$ is even. We will construct an irreflexive undirected regular graph $G$ with $n$ vertices each of degree $d$ as a certain circulant graph. To that end let us number the edges $0, \ldots, n - 1$ and recall that a graph is circulant, iff there exists a finite binary sequence $(b_0, \ldots, b_{n - 1})$ such that
\begin{align*}
a_{i j} & = b_{j - i \mod n} \in \coll{0, 1}
\end{align*}
for all $i, j \in \coll{0, \ldots, n - 1}$. Let us note that here we deviate from the previous notation and number the indices of the adjacency matrix $(a_{ij})$ of $G$ starting from $0$ to be compatible with modular arithmetics.

The irreflexivity condition, $a_{ii} = 0$ for all $i \in \coll{0, \ldots, n - 1}$, is equivalent to $b_0 = 0$. The $d$-regularity condition, $\sum_{j = 0}^{n - 1} a_{i j} = d$ for all $i \in \coll{0, \ldots, n - 1}$, boils down to
\begin{align*}
\sum_{k = 0}^{n - 1} b_k & = d.
\end{align*}
The symmetry condition, $a_{i j} = a_{j i}$ for all $i, j \in \coll{0, \ldots, n - 1}$, reads
\begin{align}
\label{eqn:symmetry}
b_k & = b_{n - k}
\end{align}
for all $k \in \coll{1, \ldots, n - 1}$. To implement it, we distinguish two cases: If $d$ is even, i.e. $d = 2d'$ for some $d' \in \mathbb{N}_0$, we just set
\begin{align*}
(b_k) & =
(0, \underbrace{1, \ldots, 1}_{d'}, \underbrace{0, \ldots, 0}_{n - 1 - d}, \underbrace{1, \ldots, 1}_{d'}).
\end{align*}
If $d$ is odd, i.e. $d = 2d' + 1$ for some $d' \in \mathbb{N}_0$, then $n$ must be even, i.e. $n - 1 = 2n' + 1$ for some $n' \in \mathbb{N}_0$. In this case we set
\begin{align}
\label{eqn:odd}
(b_k) & =
(0, \underbrace{1, \ldots, 1}_{d'}, \underbrace{0, \ldots, 0}_{n' - d'}, 1, \underbrace{0, \ldots, 0}_{n' - d'}, \underbrace{1, \ldots, 1}_{d'}).
\end{align}
Observing that the (potentially) isolated one in Equation \eqref{eqn:odd} is at index $n' + 1 = n - (n' + 1)$ shows that the symmetry condition \eqref{eqn:symmetry} is sasitfied and finishes the proof.
\end{proof}

\begin{remark}
\label{rem:handshake}
The construction in Lemma \ref{lem:handshake} has a simple visual description: Let us arrange the $n$ vertices uniformly on a circle. For each vertex, looking from the vertex in the direction of the origin, connect it to its  $d' \coloneqq \lfloor \frac{d}{2} \rfloor$ nearest neighbours lying to the left and to its $d'$ nearest neighbours lying to the right. If $d$ is odd, then there is one more edge to distribute. In that case, however, $n$ is even and so there is another vertex lying directly opposite, which we additionally connect to.
\end{remark}

\end{document}